\tikzset{
  stochastic/.style={
    circle,
    minimum size=7mm, 
    thick, 
    draw=black!80, 
    node distance=5mm
  },
  deterministic/.style={
    regular polygon,
    regular polygon sides=4, 
    inner sep=0, 
    outer sep=0,
    minimum size=9mm, 
    thick, 
    draw=black!80, 
    node distance=5mm
  },
  observed/.style={
    fill=lightgray
  },
  directed/.style={
    -latex, 
    thick
  },
  undirected/.style={
    -, 
    thick
  }
}
\newcommand{\vct}[1]{\boldsymbol{#1}}
\newcommand{\mtx}[1]{\boldsymbol{\expandafter\MakeUppercase\expandafter{#1}}}
\newcommand{\set}[1]{\mathcal{#1}}
\newcommand{\fset}[1]{\lbr #1 \rbr}
\newcommand{\reals}{\mathbb{R}}
\newcommand{\lpa}{\left(}
\newcommand{\rpa}{\right)}
\newcommand{\lbr}{\left\lbrace}
\newcommand{\rbr}{\right\rbrace}
\newcommand{\lsb}{\left[}
\newcommand{\rsb}{\right]}
\newcommand{\pden}[2]{\mathbb{p}_{\,\scriptstyle #1\,}{\lsb #2 \rsb}}
\newcommand{\rvar}[1]{\mathrm{#1}}
\newcommand{\rvct}[1]{\mathbf{#1}}
\newcommand{\nrm}[1]{\mathcal{N}\lpa #1 \rpa}
\newcommand{\expc}[2]{\mathbb{E}_{#1}\lsb #2 \rsb}
\newcommand{\dr}{\mathrm{d}}
\newcommand{\td}[2]{\frac{\dr #1}{\dr #2}}
\newcommand{\pd}[2]{\frac{\partial #1}{\partial #2}}
\newcommand{\pdd}[3]{\frac{\partial^2 #1}{\partial #2 \partial #3}}
\newcommand{\tr}{^\mathrm{T}}
\newcommand{\sml}[1]{#1}
\newcommand{\obs}[1]{\bar{#1}}
\newcommand{\lebm}[2]{\lambda^{#1}\lbr \dr #2 \rbr}
\newcommand{\haum}[2]{\mathcal{H}^{#1}\lbr \dr #2 \rbr}
\DeclareMathOperator*{\chol}{chol}
\DeclareMathOperator*{\gvn}{|}
\DeclareMathOperator*{\trace}{Trace}
\newcommand{\abbrdef}[2]{\textit{#1}~(\textrm{#2})}
\newcommand{\abbrref}[1]{\textrm{#1}}
\newcommand{\func}[1]{\mathit{#1}}
\newcommand{\vctfunc}[1]{{\boldsymbol{\expandafter\MakeUppercase\expandafter{#1}}}}
\newcommand{\genfunc}{\vctfunc{g}}
\newcommand{\uconfunc}{\vctfunc{c}}
\newcommand{\inlinetr}{\phantom{}^\textsc{t}}
  \g@addto@macro\@uclclist{%
    \eth\Eth
    \thorn\Thorn
    \alpha\Alpha
    \beta\Beta
    \gamma\Gamma
    \delta\Delta
    \epsilon\Epsilon
    \varepsilon\Varepsilon
    \zeta\Zeta
    \eta\Eta
    \theta\Theta
    \vartheta\Vartheta
    \iota\Iota
    \kappa\Kappa
    \lambda\Lambda
    \mu\Mu
    \nu\Nu
    \xi\Xi
    \omicron\Omicron
    \pi\Pi
    \varpi\Varpi
    \rho\Rho
    \varrho\Varrho
    \sigma\Sigma
    \varsigma\Varsigma
    \tau\Tau
    \upsilon\Upsilon
    \phi\Phi
    \varphi\Varphi
    \chi\Chi
    \psi\Psi
    \omega\Omega
  }
\algrenewcommand\alglinenumber[1]{\tiny #1:}
\newtheoremstyle{thmstyle}
  {4pt} 
  {4pt} 
  {\it} 
  {} 
  {\bfseries} 
  {.} 
  {.5em} 
  {} 
\theoremstyle{thmstyle} \newtheorem*{theorem}{Theorem}
\newtheorem*{corollary}{Corollary}
\begin{document}

\twocolumn[
  \aistatstitle{Asymptotically exact inference in differentiable generative models}
  \aistatsauthor{ Matthew M. Graham \And Amos J. Storkey }
  \aistatsaddress{University of Edinburgh \And University of Edinburgh}
]

\begin{abstract}  
Many generative models can be expressed as a differentiable function of random inputs drawn from some simple probability density. This framework includes both deep generative architectures such as Variational Autoencoders and a large class of procedurally defined simulator models. We present a method for performing efficient MCMC inference in such models when conditioning on observations of the model output. For some models this offers an asymptotically exact inference method where Approximate Bayesian Computation might otherwise be employed. We use the intuition that inference corresponds to integrating a density across the manifold corresponding to the set of inputs consistent with the observed outputs. This motivates the use of a constrained variant of Hamiltonian Monte Carlo which leverages the smooth geometry of the manifold to coherently move between inputs exactly consistent with observations. We validate the method by performing inference tasks in a diverse set of models.
\end{abstract}

\section{Introduction}\label{sec:intro}

Developments in generative modelling with deep architectures such as \abbrdef{Variational Auto-Encoders}{VAEs} \citep{kingma2013auto,rezende2014stochastic} and \abbrdef{Generative Adversarial Nets}{GANs} \citep{goodfellow2014generative} have made it possible to learn probabilistic models of increasingly complex, high-dimensional data-sets. The generator of these models can be formulated as a differentiable\footnote{Differentiable here means that the Jacobian $\pd{\genfunc}{\vct{u}}$ is defined a.e.} function $\genfunc : \set{U} \to \set{X}$ which maps random vector inputs $\rvct{u} \in \set{U} = \reals^M$ drawn from a probability distribution with known density $\pden{\rvct{u}}{\vct{u}} = \rho(\vct{u})$ to an implicitly defined distribution over random vector outputs $\rvct{x} = \genfunc(\rvct{u}) \in \set{X} = \reals^N$. We will refer here to a model of this form as a \abbrdef{differentiable generative model}{DGM}.

As well as parametric models learnt from data, the \abbrref{DGM} framework also encapsulates a wide class of simulator models where $\genfunc$ is defined procedurally, e.g. numerical integration of a system of stochastic differential equations. Here the random inputs $\rvct{u}$ are the series of draws from a random number generator during the simulator's execution. The operations used in many simulations are differentiable and automatic differentiation provides a computationally efficient framework for calculating the exact derivatives of a differentiable simulator's outputs with respect to its random inputs given just the code used to define the model. 

Often we will be interested in using a generative model to make inferences about the modelled variables given observations related to outputs of the model. For example given a generative model of images we may wish to infer plausible in-paintings of an image region given knowledge of the surrounding pixel values. Similarly given a simulator of a physical process and generator of parameters of the process which we believe are reasonable a priori, we may wish to infer our posterior beliefs about the parameters under the model given observations of the physical process.

Concretely, we consider the generated output $\rvct{x}$ as being partitioned into observed variables $\rvct{y} \in \set{Y} = \reals^{N_y}$ and latent variables $\rvct{z}\in\set{Z}=\reals^{N_z}$ to be inferred, i.e. $\rvct{x} = [\rvct{y};\,\rvct{z}]$ and $\set{X}=\set{Y}\times\set{Z}$. Likewise we partition the generator such that $\rvct{y} = \genfunc_{y}(\rvct{u})$ and $\rvct{z} = \genfunc_{z}(\rvct{u})$. Inference is then the task of computing expectations of $\rvct{z}$ conditioned on an observed $\rvct{y}$. 

A common special case is where $\rvct{z}$ is generated from a subset of the random inputs $\rvct{u}_{z}$, and $\rvct{y}$ is then generated as a function of $\rvct{z}$ and the remaining random inputs $\rvct{u}_{y}$. For example when $\rvct{z}$ is sampled from a \emph{prior} and $\rvct{y}$ sampled from a \emph{likelihood} given $\rvct{z}$. We will refer to this specialism as a \emph{directed model} in reference to the fact that it has a natural interpretation as a directed graphical model as illustrated in figure \ref{fig:model-graphs}.

\begin{figure}
\centering
\begin{subfigure}[t]{.3\linewidth}
\centering
\begin{tikzpicture}
  \node at (0, 0.5) [stochastic] (u) {$\rvct{u}$};
  \node at (1, 1) [deterministic] (z) {$\rvct{z}$};
  \node at (1, 0) [deterministic, observed] (y) {$\rvct{y}$};
  \path (u) edge [directed] (y)
        (u) edge [directed] (z);
\end{tikzpicture}
\caption{}\label{sfig:undirected-sgc}
\end{subfigure}%
\begin{subfigure}[t]{.3\linewidth}
\centering
\begin{tikzpicture}
  \node at (0, 1) [stochastic] (u_z) {$\rvct{u}_{z}$};
  \node at (0, 0) [stochastic] (u_y) {$\rvct{u}_{y}$};
  \node at (1, 1) [deterministic] (z) {$\rvct{z}$};
  \node at (1, 0) [deterministic, observed] (y) {$\rvct{y}$};
  \path (u_y) edge [directed] (y)
        (z) edge [directed] (y)
        (u_z) edge [directed] (z);
\end{tikzpicture}
\caption{}\label{sfig:directed-sgc}
\end{subfigure}%
\begin{subfigure}[t]{.2\linewidth}
\centering
\begin{tikzpicture}
  \node at (0, 1) [stochastic] (z_gm) {$\rvct{z}$};
  \node at (0, 0) [stochastic, observed] (y_gm) {$\rvct{y}$};
  \path (z_gm) edge [undirected] (y_gm);
\end{tikzpicture}
\caption{}\label{sfig:undirected-gm}
\end{subfigure}%
\begin{subfigure}[t]{.2\linewidth}
\centering
\begin{tikzpicture}
  \node at (0, 1) [stochastic] (z_gm) {$\rvct{z}$};
  \node at (0, 0) [stochastic, observed] (y_gm) {$\rvct{y}$};
  \path (z_gm) edge [directed] (y_gm);
\end{tikzpicture}
\caption{}\label{sfig:directed-gm}
\end{subfigure}%
\caption{\subref{sfig:undirected-sgc} and \subref{sfig:directed-sgc} \abbrdef{Stochastic computation graphs}{SGC} \citep{schulman2015gradient} visualising models considered in this paper. Circular nodes represent stochastic variables sampled from a conditional distribution on their parent nodes or marginal for root nodes. Square nodes represent variables that are deterministic functions of their parents. Shaded nodes are observed. \subref{sfig:undirected-sgc} \abbrref{SGC} of general case in which $\rvct{y}$ and $\rvct{z}$ are jointly generated from $\rvct{u}$. \subref{sfig:directed-sgc} \abbrref{SGC} of the directed case in which we first generate $\rvct{z}$ from $\rvct{u}_z$ then generate $\rvct{y}$ from $\rvct{z}$ and $\rvct{u}_y$. \subref{sfig:undirected-gm} Undirected graphical model corresponding to \subref{sfig:undirected-sgc} when integrating out $\rvct{u}$. \subref{sfig:directed-gm} Directed graphical model which is a natural representation of \subref{sfig:directed-sgc} when integrating out $\rvct{u}_y$ and $\rvct{u}_z$.}
\label{fig:model-graphs}
\end{figure}

For many \abbrref{DGMs} we cannot explicitly compute the joint density on the generator outputs $\pden{\rvct{x}}{\vct{x}} = \pden{\rvct{y},\,\rvct{z}}{\vct{y},\,\vct{z}}$ and so conditional density $\pden{\rvct{z}\gvn\rvct{y}}{\vct{z}\gvn\vct{y}}$ we wish to compute expectations with respect to. In directed models this usually corresponds to not being able to evaluate the likelihood $\pden{\rvct{y}\gvn\rvct{z}}{\vct{y} \gvn \vct{z}}$. The lack of a closed form density impedes the direct use of approximate inference methods such as variational inference and \abbrdef{Markov chain Monte Carlo}{MCMC}.

\section{Approximate Bayesian Computation}\label{sec:abc}

A lack of explicit likelihoods is the motivation for \emph{likelihood-free} inference methods such as \abbrdef{Approximate Bayesian Computation}{ABC} \citep{beaumont2002approximate,marin2012approximate}. In \abbrref{ABC} the simulated observed outputs $\sml{\rvct{y}}$ are decoupled from the observed data $\obs{\rvct{y}}$ by a noise model or \emph{kernel} $\pden{\obs{\rvct{y}}\gvn\sml{\rvct{y}}}{\obs{\vct{y}}\gvn\sml{\vct{y}}} = k_\epsilon(\obs{\vct{y}};\,\sml{\vct{y}})$ with tolerance parameter $\epsilon$, e.g. $k_\epsilon(\obs{\vct{y}};\,\sml{\vct{y}}) \propto \mathbb{I}\lsb \left|\obs{\vct{y}} - \sml{\vct{y}}\right| < \epsilon \rsb / \epsilon^{N_y}$ (uniform ball kernel) or $k_\epsilon(\obs{\vct{y}};\,\sml{\vct{y}}) = \nrm{\obs{\vct{y}}\gvn\sml{\vct{y}},\,\epsilon^2\mtx{I}}$ (Gaussian kernel). The \emph{\abbrref{ABC} likelihood} is then defined as
\begin{equation}\label{eq:abc_likelihood}
  \pden{\obs{\rvct{y}} \gvn \rvct{z}}{\obs{\vct{y}} \gvn \vct{z}; \epsilon} =
  \int_{\set{Y}}
    k_\epsilon(\obs{\vct{y}};\,\sml{\vct{y}})\,
    \pden{\sml{\rvct{y}} \gvn \rvct{z}}{\sml{\vct{y}} \gvn \vct{z}}
  \,\dr \sml{\vct{y}}.
\end{equation}
The \emph{\abbrref{ABC} posterior} is likewise defined as
\begin{equation}\label{eq:abc-posterior}
  \pden{\rvct{z}\gvn\obs{\rvct{y}}}{\vct{z}\gvn\obs{\vct{y}};\,\epsilon} =
  \frac{
  \int_{\set{Y}}
    k_\epsilon(\obs{\vct{y}};\,\sml{\vct{y}})\,
    \pden{\sml{\rvct{y}} \gvn \rvct{z}}{\sml{\vct{y}} \gvn \vct{z}}\,
    \pden{\rvct{z}}{\vct{z}}
  \,\dr \sml{\vct{y}}
  }
  {
  \int_{\set{Y}}
    k_\epsilon(\obs{\vct{y}};\,\sml{\vct{y}})\,
    \pden{\sml{\rvct{y}}}{\sml{\vct{y}}}
  \,\dr \sml{\vct{y}}
  }  
\end{equation}
and we can express expectations of functions of the latent variables $\rvct{z}$ with respect to this approximate posterior
\begin{align}\label{eq:abc-expectations}
  &\expc{}{f(\rvct{z}) \gvn \obs{\rvct{y}}=\obs{\vct{y}};\,\epsilon} =
  \int_{\set{Z}} 
    f(\vct{z}) \,
    \pden{\rvct{z}\gvn\obs{\rvct{y}}}{\vct{z}\gvn\obs{\vct{y}};\,\epsilon}
  \,\dr\vct{z}
  \\
  &=
  \frac{
  \int_\set{Z} \int_{\set{Y}}
      f(\vct{z})\,
      k_\epsilon(\obs{\vct{y}};\,\sml{\vct{y}})\,
      \pden{\sml{\rvct{y}} \gvn \rvct{z}}{\sml{\vct{y}} \gvn \vct{z}}\,
      \pden{\rvct{z}}{\vct{z}}
  \,\dr \sml{\vct{y}}
  \,\dr{\vct{z}}
  }
  {
  \int_\set{Z} \int_{\set{Y}}
      k_\epsilon(\obs{\vct{y}};\,\sml{\vct{y}})\,
      \pden{\sml{\rvct{y}} \gvn \rvct{z}}{\sml{\vct{y}} \gvn \vct{z}}\,
      \pden{\rvct{z}}{\vct{z}}
  \,\dr \sml{\vct{y}}
  \,\dr{\vct{z}}
  }.
  \nonumber
\end{align}
Various Monte Carlo inference schemes can be used to estimate this expectation. The simplest is to generate a set of independent samples $\lbr \vct{y}^{(s)},\,\vct{z}^{(s)} \rbr_{s=1}^S$ from $\pden{\rvct{y},\,\rvct{z}}{\vct{y},\,\vct{z}}$ \footnote{\abbrref{ABC} is usually applied to directed models hence this is usually considered as generating $\rvct{z}$ from a prior then simulating $\rvct{y}$ given $\rvct{z}$ however more generally we can just sample from the joint.} and use an importance sampling estimator
\begin{equation}
  \expc{}{f(\rvct{z}) \gvn \obs{\rvct{y}}=\obs{\vct{y}};\,\epsilon} \approx
  \frac{
  \sum_{s=1}^S \lbr 
    f\lpa\vct{z}^{(s)}\rpa \, 
    k_\epsilon\lpa\obs{\vct{y}};\,\sml{\vct{y}}^{(s)}\rpa 
  \rbr
  }
  {
  \sum_{s=1}^S \lbr 
    k_\epsilon\lpa\obs{\vct{y}};\,\sml{\vct{y}}^{(s)}\rpa 
  \rbr
  }.
\end{equation}
In the case of a uniform ball kernel this corresponds to the standard ABC reject algorithm, with expectations being estimated as averages over the latent variable samples for which the corresponding simulated outputs are within a (Euclidean) distance of $\epsilon$ from the observations.

Alternatively a \abbrref{MCMC} scheme can be used to estimate the \abbrref{ABC} posterior expectation in  \eqref{eq:abc-expectations} \citep{marjoram2003markov}. A Markov chain is constructed with stationary distribution
\begin{equation}\label{eq:abc-mcmc-target}
  \pden{\rvct{y},\,\rvct{z}\gvn\obs{\rvct{y}}}{\vct{y},\,\vct{z}\gvn\obs{\vct{y}}} \propto
  k_\epsilon(\obs{\vct{y}};\,\sml{\vct{y}})\,
  \pden{\sml{\rvct{y}} \gvn \rvct{z}}{\sml{\vct{y}} \gvn \vct{z}}\,
  \pden{\rvct{z}}{\vct{z}}
\end{equation}
by proposing a new state $(\vct{y}',\,\vct{z}')$ given the current state $(\vct{y},\,\vct{z})$ by sampling $\vct{z}' \sim q(\cdot\gvn\vct{z})$ from some perturbative proposal distribution $q$ on $\set{Z}$ and then generating a new $\vct{y}' \sim \pden{\sml{\rvct{y}} \gvn \rvct{z}}{\cdot \gvn \vct{z}}$. This is then accepted with probability
\begin{equation}\label{eq:abc-mcmc-accept}
  a(\vct{y}',\vct{z}' \gvn \vct{y},\vct{z}) =
  \min\lsb 1,\,
    \frac
    {k_\epsilon(\obs{\vct{y}};\,\sml{\vct{y}}')\,q(\vct{z}\gvn\vct{z}')\,\pden{\rvct{z}}{\vct{z}'}}
    {k_\epsilon(\obs{\vct{y}};\,\sml{\vct{y}})\,q(\vct{z}'\gvn\vct{z})\,\pden{\rvct{z}}{\vct{z}}}
  \rsb,
\end{equation}
the overall transition operator leaving \eqref{eq:abc-mcmc-target} stationary. The samples of the chain state can  then be used to compute consistent estimators of \eqref{eq:abc-expectations}. This scheme relies on being able to evaluate $\pden{\rvct{z}}{\vct{z}}$ and so is only applicable to directed models where the generator $\genfunc_z$ is of a tractable form such that $\pden{\rvct{z}}{\vct{z}}$ is known.

In the limit of $\epsilon \to 0$ valid \abbrref{ABC} kernels $k_\epsilon(\obs{\vct{y}};\,\sml{\vct{y}})$ collapse to a Dirac delta distribution ${\delta}\lpa \obs{\vct{y}} - \sml{\vct{y}}\rpa$ and so the \abbrref{ABC} posterior in \eqref{eq:abc-posterior} converges to the posterior $\pden{\rvct{z}\gvn\rvct{y}}{\vct{z}\gvn\obs{\vct{y}}}$. Similarly for the \abbrref{ABC} posterior expectation in \eqref{eq:abc-expectations} we have
\begin{align}\label{eq:abc-expectation-limit}
  &\lim_{\epsilon\to0} \lbr \expc{}{f(\rvct{z}) \gvn \obs{\rvct{y}}=\obs{\vct{y}};\,\epsilon} \rbr\nonumber
  \\
  &=
  \frac{
  \int_\set{Z} \int_{\set{Y}}
      f(\vct{z})\,
      {\delta}\lpa \obs{\vct{y}} - \sml{\vct{y}}\rpa\,
      \pden{\sml{\rvct{y}} \gvn \rvct{z}}{\sml{\vct{y}} \gvn \vct{z}}\,
      \pden{\rvct{z}}{\vct{z}}
  \,\dr \sml{\vct{y}}
  \,\dr{\vct{z}}
  }
  {
  \int_\set{Z} \int_{\set{Y}}
      {\delta}\lpa \obs{\vct{y}} - \sml{\vct{y}}\rpa\,
      \pden{\sml{\rvct{y}} \gvn \rvct{z}}{\sml{\vct{y}} \gvn \vct{z}}\,
      \pden{\rvct{z}}{\vct{z}}
  \,\dr \sml{\vct{y}}
  \,\dr{\vct{z}}
  }
  \\
  &= 
  \frac{
  \int_\set{Z}
      f(\vct{z})\,
      \pden{\sml{\rvct{y}} \gvn \rvct{z}}{\obs{\vct{y}} \gvn \vct{z}}\,
      \pden{\rvct{z}}{\vct{z}}
  \,\dr{\vct{z}}
  }
  {
  \int_\set{Z}
      \pden{\sml{\rvct{y}} \gvn \rvct{z}}{\obs{\vct{y}} \gvn \vct{z}}\,
      \pden{\rvct{z}}{\vct{z}}
  \,\dr{\vct{z}}
  }
  = 
  \expc{}{f(\rvct{z}) \gvn \rvct{y}=\obs{\vct{y}}}
  \nonumber.
\end{align}
For both the \abbrref{ABC} reject and \abbrref{ABC} \abbrref{MCMC} schemes just described, simulated observations $\vct{y}$ are independently generated from the conditional $\pden{\sml{\rvct{y}} \gvn \rvct{z}}{\vct{y} \gvn \vct{z}}$ given the current latent variables $\vct{z}$. The observed values $\obs{\vct{y}}$ are a zero-measure set in $\set{Y}$ under non-degenerate $\pden{\sml{\rvct{y}} \gvn \rvct{z}}{\vct{y} \gvn \vct{z}}$ and so as $\epsilon \to 0$ the probability of accepting a sample / proposed move becomes zero. Applying \abbrref{ABC} with a non-zero $\epsilon$ therefore can be seen as a practically motivated relaxation of the constraint that true and simulated data exactly match, and hence the `approximate' in \emph{Approximate Bayesian Computation}.

Alternatively the kernel $k_\epsilon$ can be given a modelling interpretation as representing uncertainty introduced by noise in the observations or mismatch between the unknown generative process by which the observed data was produced and the generative model \citep{ratmann2009model,wilkinson2013approximate}. In practice however the kernel and choice of $\epsilon$ seems more often to be motivated on computational efficiency grounds \citep{robert2010model}.

\section{Inference in the input space}\label{sec:cond-inf}

Using the \emph{Law of the Unconscious Statistician} we can reparametrise \eqref{eq:abc-expectations} in terms of the random inputs $\rvct{u}$ to the generative model
\begin{equation*}\label{eq:abc-expectation-input-space}
  \expc{}{f(\rvct{z}) \gvn \obs{\rvct{y}}=\obs{\vct{y}};\,\epsilon} \propto
  \int_{\set{U}} 
    f\,\circ\,\genfunc_z(\vct{u}) \,
    k_\epsilon{\lpa\obs{\vct{y}};\,\genfunc_y(\vct{u})\rpa}\,
    \rho(\vct{u})
  \,\dr\vct{u}.
\end{equation*}
This immediately indicates we can estimate \abbrref{ABC} expectations by applying any \abbrref{MCMC} method of choice in the input space to construct a chain with stationary density 
\begin{equation}\label{eq:abc-density-input-space}
  \pi_\epsilon(\vct{u}) \propto 
  k_\epsilon{\lpa\obs{\vct{y}};\,\genfunc_y(\vct{u})\rpa}\,\rho(\vct{u}).
\end{equation}

We can also however again consider the limit of $\epsilon \to 0$ as derived in the output space in \eqref{eq:abc-expectation-limit}. We have that the kernel term $k_\epsilon\lpa\obs{\vct{y}};\,\genfunc_y(\vct{u})\rpa \to {\delta}\lpa \obs{\vct{y}} - \genfunc_y(\vct{u})\rpa$ and so
\begin{align}\label{eq:expectation-input-space-limit}
  \expc{}{f(\rvct{z}) \gvn \rvct{y}=\obs{\vct{y}}} &=
  \lim_{\epsilon\to 0} \lbr \expc{}{f(\rvct{z}) \gvn \obs{\rvct{y}}=\obs{\vct{y}};\,\epsilon} \rbr
  \\
  &\propto
  \int_{\set{U}} 
    f\,\circ\,\genfunc_z(\vct{u}) \,
    \delta\lpa\obs{\vct{y}} - \genfunc_y(\vct{u})\rpa\,
    \rho(\vct{u})
  \,\dr\vct{u}.
  \nonumber
\end{align}
The Dirac delta term restricts the integral across the input space $\set{U}$ to an embedded, $M - N_y$ dimensional, implicitly-defined manifold $\set{C} = \fset{\vct{u} \in \set{U} :  \genfunc_y(\vct{u}) - \obs{\vct{y}} = \vct{0}}$. By applying the the \emph{Co-Area Formula} \citep[\S 3.2.12]{federer2014geometric} the integral with respect to the Lebesgue measure across $\set{U}$ in \eqref{eq:expectation-input-space-limit} can be rewritten as integral across the embedded manifold $\set{C}$ with respect the Hausdorff measure for the manifold
\begin{align}\label{eq:cond-expectation-input-space}
  &\expc{}{f(\rvct{z}) \gvn \rvct{y}=\obs{\vct{y}}}  
  \propto\\
  &
  \int_{\set{C}} 
    f\,\circ\,\genfunc_z(\vct{u}) \,
    {\left|\pd{\genfunc_y}{\vct{u}}\pd{\genfunc_y}{\vct{u}}^{\textsc{t}}\right|^{-\frac{1}{2}}}
    \rho(\vct{u})
  \,\haum{M-N_y}{\vct{u}}.
  \nonumber
\end{align}  
Therefore if we generate a set of input \abbrref{MCMC} samples $\lbrace\vct{u}^{(s)}\rbrace_{s=1}^S$ restricted to $\set{C}$ and with invariant density
\begin{equation}\label{eq:tgt-density-on-manifold}
    \pi(\vct{u}) \propto 
    \left|\pd{\genfunc_y}{\vct{u}}\pd{\genfunc_y}{\vct{u}}^{\textsc{t}}\right|^{-\frac{1}{2}}
    \rho(\vct{u}) 
\end{equation}
with respect to the Hausdorff measure for the manifold, then
\begin{equation}\label{eq:mc-est-cond-expc}
    \expc{}{\func{f}(\rvct{z}) \gvn \rvct{y} = \vct{y}} 
    =
    \lim_{S \to \infty}
    \frac{1}{S} 
    \sum_{s=1}^S \lsb
      \func{f}\,\circ\,\genfunc_z\lpa\vct{u}^{(s)}\rpa
    \rsb.
\end{equation}
Intuitively the determinant term in \eqref{eq:tgt-density-on-manifold} adjusts for the change in the infinitesimal `thickness' (extent in directions orthogonal to the tangent space) of the manifold when mapping through the generator function $\genfunc_y$. The result \eqref{eq:cond-expectation-input-space} is given in a slightly different form in \citep{diaconis2013sampling}. A derivation is provided in Appendix \ref{sec:cond-expc-deriv} of the Supplementary Material.

A general framework for performing asymptotically exact inference in differentiable generative models is therefore to define \abbrref{MCMC} updates which explore the target density \eqref{eq:tgt-density-on-manifold} on the constraint manifold $\set{C}$. We propose here to use a method which simulates the dynamics of a constrained mechanical system.

\section{Constrained Hamiltonian Monte Carlo}\label{sec:chmc}

In \abbrdef{Hamiltonian Monte Carlo}{HMC} \citep{duane1987hybrid,neal2011mcmc} the vector variable of interest $\rvct{u}$ is augmented with a momentum variable $\rvct{p} \in \set \reals^M$. The momentum is taken to be independently Gaussian distributed with zero mean and covariance $\mtx{M}$, often called the mass matrix. The new joint target density is then proportional to $\exp\lsb -H(\vct{u}, \vct{p})\rsb$ where 
\begin{equation}\label{eq:hamiltonian}
    H(\vct{u}, \vct{p}) = 
    -\log\pi(\vct{u}) + \frac{1}{2}\vct{p}\tr\mtx{M}^{-1}\vct{p}
\end{equation}
is termed the \emph{Hamiltonian}.

The canonical Hamiltonian dynamic is described by the system of ordinary differential equations
\begin{equation}\label{eq:hamiltonian_dynamics}
    \td{\vct{u}}{t} = \pd{H}{\vct{p}} = \mtx{M}^{-1}\vct{p},
    ~~
    \td{\vct{p}}{t} = -\pd{H}{\vct{u}} = \pd{\log\pi}{\vct{u}}.
\end{equation}
This dynamic is time-reversible, measure-preserving and exactly conserves the Hamiltonian. Symplectic integrators allow approximate integration of the Hamiltonian flow while maintaining the time-reversibility and measure-preservation properties. Subject to stability bounds on the time-step, such integrators will exactly conserve some `nearby' Hamiltonian, and so the change in the Hamiltonian will tend to remain small even over long simulated trajectories \citep{leimkuhler2004simulating}. 

These properties make simulated Hamiltonian dynamics an ideal proposal mechanism for a Metropolis \abbrref{MCMC} method. The Metropolis accept ratio for a proposal $(\vct{u}_{\text{p}},\,\vct{p}_{\text{p}})$ generated by simulating the dynamic $N_s$ time steps forward from $(\vct{u},\, \vct{p})$ with a symplectic integrator and then negating the momentum, is simply $\exp\lbr H(\vct{u},\,\vct{p}) - H(\vct{u}_{\text{p}},\,\vct{p}_{\text{p}})\rbr$. Typically the change in the Hamiltonian will be small and so the probability of acceptance high. To ensure ergodicity, dynamic moves can be interspersed with updates independently sampling a new momentum from $\nrm{\vct{0},\,\mtx{M}}$.

In our case the system is subject to a constraint of the form $\uconfunc(\vct{u}) = \genfunc_y(\vct{u}) - \obs{\vct{y}} = \vct{0}$. By introducing Lagrangian multipliers $\lambda_i$ for each of the constraints, the Hamiltonian for a constrained system can be written as
\begin{equation}\label{eq:constrained_hamiltonian}
    H(\vct{u}, \vct{p}) = 
    -\log\pi(\vct{u}) + \frac{1}{2}\vct{p}\tr\mtx{M}^{-1}\vct{p}
    + \vct{\lambda}\tr\uconfunc(\vct{u}),
\end{equation}
and a corresponding constrained Hamiltonian dynamic
\begin{align}\label{eq:constrained_hamiltonian_dynamics}
    \td{\vct{u}}{t} = \mtx{M}^{-1}\vct{p},
    ~~
    \td{\vct{p}}{t} = \pd{\log\pi}{\vct{u}} - \pd{\uconfunc}{\vct{u}}\tr\vct{\lambda},
    \\
    \text{subject to }
    \uconfunc(\vct{u}) = \vct{0},
    ~~
    \pd{\uconfunc}{\vct{u}}\mtx{M}^{-1}\vct{p} = \vct{0}.\label{eq:chmc-conditions}
\end{align}

A popular numerical integrator for simulating constrained Hamiltonian dynamics is \abbrref{RATTLE} \citep{andersen1983rattle} (and the algebraically equivalent \abbrref{SHAKE} \citep{ryckaert1977numerical} scheme). This a natural generalisation of the St\"{o}rmer-Verlet (leapfrog) integrator typically used in standard \abbrref{HMC} with additional projection steps in which the Lagrange multipliers $\vct{\lambda}$ are solved for to satisfy the conditions \eqref{eq:chmc-conditions}. RATTLE and SHAKE maintain the properties of being time-reversible, measure-preserving and symplectic \citep{leimkuhler1994symplectic}.

The use of constrained dynamics in \abbrref{HMC} has been proposed several times. In the molecular dynamics literature, both \citep{hartmann2005constrained} and \citep{lelievre2012langevin} suggest using a simulated constrained dynamic within a \abbrref{HMC} framework to estimate free-energy profiles.

Most relevantly here \citep{brubaker2012family} proposes using a constrained \abbrref{HMC} variant to perform inference in distributions defined on implicitly defined embedded non-linear manifolds. This gives sufficient conditions on $H$, $\set{C}$ and $\uconfunc$ for the scheme to satisfy detailed balance and be ergodic: that $H$ is $C^2$ continuous, and $\set{C}$ is a connected smooth and differentiable manifold and $\pd{\uconfunc}{\vct{u}}$ has full row-rank everywhere.

\section{Method}\label{sec:method}

Our constrained \abbrref{HMC} implementation is shown in algorithm \ref{alg:constrained_hmc}. We use a generalisation of the \abbrref{RATTLE} scheme to simulate the dynamic. The inner updates of the state to solve for the geodesic motion on the constraint manifold are split into multiple smaller steps, which can be considered a special case of the scheme described in \citep{leimkuhler2016efficient}. This allows more flexibility in choosing an appropriately small step-size to ensure convergence of the iterative solution of the equations projecting on to the constraint manifold while still allowing a more efficient larger step size for updates to the momentum due to the negative log density gradient. We have assumed $\mtx{M} = \mtx{I}$ here; other mass matrix choices can be equivalently implemented by adding an initial linear transformation stage in the generator.

\begin{algorithm*}
\caption{Constrained Hamiltonian Monte Carlo algorithm}
\label{alg:constrained_hmc}
{\setstretch{1.05}\small
\newcommand{\negenergy}{\log\pi} 
\begin{algorithmic}
\small
    \Require
    $(\vct{u},\, \vct{p})$ : current state pair with $\uconfunc(\vct{u}) = \vct{0},\,\pd{\uconfunc}{\vct{u}}\vct{p} = \vct{0}$; 
    $(\mtx{J},\, \mtx{L})$ : constraint Jacobian and Gram matrix Cholesky factor at current $\vct{u}$;\\ 
    $\epsilon$ : convergence tolerance; 
    $\delta t$ : time step; 
    $N_s$ : number of time steps to simulate; 
    $N_g$ : number of geodesic updates per time step.
    \Ensure\raggedright
    $(\vct{u}',\, \vct{p}')$ : new state pair with $\uconfunc(\vct{u}') = \vct{0}$, $\pd{\uconfunc}{\vct{u}'}\vct{p}' = \vct{0}$; 
    $(\mtx{J}',\, \mtx{L}')$ : constraint Jacobian and Gram matrix Cholesky factor at new $\vct{u}'$.
\end{algorithmic}
\vspace{2mm}
\hrule
\begin{multicols}{2}
\small
\begin{algorithmic}[1]
    \State $\vct{u}_{\text{p}},\,\vct{p}_{\text{p}},\,\mtx{J}_{\text{p}},\,\mtx{L}_{\text{p}} \gets \Call{SimulateDynamic}{\vct{u},\,\vct{p},\,\mtx{J},\,\mtx{L}}$
    \vspace{0.5mm}
    \State $r \gets \textrm{Uniform}(0,\,1)$
    \If{$r < \exp\lbr H(\vct{u},\vct{p}) - H(\vct{u}_{\text{p}},\vct{p}_{\text{p}}) \rbr$}
        \State $\vct{u}',\,\vct{p}',\,\mtx{J}',\,\mtx{L}' 
        \gets \vct{u}_{\text{p}},\, \vct{p}_{\text{p}},\,\mtx{J}_{\text{p}},\,\mtx{L}_{\text{p}}$
    \Else
        \State $\vct{u}',\,\vct{p}',\,\mtx{J}',\,\mtx{L}' 
        \gets \vct{u},\, \vct{p},\,\mtx{J},\,\mtx{L}$
    \EndIf
    \State $\vct{n} \gets \textrm{Normal}(\vct{0},\,\mtx{I})$
    \State $\vct{p}' \gets \Call{ProjectMomentum}{\vct{n},\,\mtx{J}',\,\mtx{L}'}$
    \State 
    \Function{SimulateDynamic}{$\vct{u}$, $\vct{p}$, $\mtx{J}$, $\mtx{L}$}
        \State $\tilde{\vct{p}} \gets \vct{p} + \frac{\delta t}{2} \left.\pd{\negenergy}{\vct{u}}\right|_{\vct{u}}$
        \vspace{0.5mm}
        \State $\vct{p} \gets \Call{ProjectMomentum}{\tilde{\vct{p}},\,\mtx{J},\,\mtx{L}}$
        \State $\vct{u},\,\vct{p},\,\mtx{J},\,\mtx{L} \gets \Call{SimulateGeodesic}{\vct{u},\,\vct{p},\,\mtx{J},\,\mtx{L}}$
        \For{$s \in \fset{2 \dots N_s}$}
            \State $\tilde{\vct{p}} \gets \vct{p} + \delta t \left.\pd{\negenergy}{\vct{u}}\right|_{\vct{u}}$
            \vspace{0.5mm}
            \State $\vct{p} \gets \Call{ProjectMomentum}{\tilde{\vct{p}},\,\mtx{J},\,\mtx{L}}$
            \State $\vct{u},\,\vct{p},\,\mtx{J},\,\mtx{L} \gets \Call{SimulateGeodesic}{\vct{u},\,\vct{p},\,\mtx{J},\,\mtx{L}}$
        \EndFor
        \State $\tilde{\vct{p}}\gets \vct{p} + \frac{\delta t}{2} \left.\pd{\negenergy}{\vct{u}}\right|_{\vct{u}}$
        \vspace{0.5mm}
        \State $\vct{p} \gets \Call{ProjectMomentum}{\tilde{\vct{p}},\,\mtx{J},\,\mtx{L}}$
        \State \Return $\vct{u},\,\vct{p},\,\mtx{J},\,\mtx{L}$
    \EndFunction
    \columnbreak
    \Function{ProjectPosition}{$\vct{u}$, $\mtx{J}$, $\mtx{L}$}
         \State $\vct{c} \gets \uconfunc(\vct{u})$
         \While{$\Vert\vct{c}\Vert_{\infty} > \epsilon$}
             \State $\vct{u} \gets \vct{u} - \mtx{J}\tr \mtx{L}^{-\textrm{T}} \mtx{L}^{-1} \vct{c}$
             \State $\vct{c} \gets \uconfunc(\vct{u})$
         \EndWhile
         \State \Return $\vct{u}$
    \EndFunction
    \State 
    \Function{ProjectMomentum}{$\vct{p}$, $\mtx{J}$, $\mtx{L}$}
        \State \Return $\vct{p} - \mtx{J}\tr\mtx{L}^{-\textrm{T}}\mtx{L}^{-1}\mtx{J}\vct{p}$
    \EndFunction
    \State 
    \Function{SimulateGeodesic}{$\vct{u}$, $\vct{p}$, $\mtx{J}$, $\mtx{L}$}
        \For{$i \in \fset{1 \dots N_g}$}
            \State $\tilde{\vct{u}} \gets \vct{u} + \frac{\delta t}{N_g} \,\vct{p}$
            \State $\vct{u}' \gets \Call{ProjectPosition}{\tilde{\vct{u}}, \mtx{J},\, \mtx{L}}$
            \State $\mtx{J} \gets \left.\pd{\uconfunc}{\vct{u}}\right|_{\vct{u}'}$
            \State $\mtx{L} \gets \chol \lpa \mtx{J}\mtx{J}\tr \rpa$ \label{ln:chmc-cholesky}
            \State $\tilde{\vct{p}} \gets \frac{N_g}{\delta t}\lpa \vct{u}' - \vct{u} \rpa$
            \State $\vct{p} \gets \Call{ProjectMomentum}{\tilde{\vct{p}},\,\mtx{J},\,\mtx{L}}$
            \State $\vct{u} \gets \vct{u}'$
        \EndFor        
        \State \Return $\vct{u},\,\vct{p},\,\mtx{J},\,\mtx{L}$
    \EndFunction
\end{algorithmic}
\end{multicols}
}
\end{algorithm*}

Each inner geodesic time-step involves making an unconstrained update $\vct{u} \rightarrow \tilde{\vct{u}}$ and then projecting $\tilde{\vct{u}}$ back on to $\set{C}$ by solving for $\vct{\lambda}$ which satisfy $\uconfunc(\tilde{\vct{u}} - {\pd{\uconfunc}{\vct{u}}}^\textsc{t}\vct{\lambda}) = \vct{0}$. This is performed in the function \textsc{ProjectPosition} in algorithm \ref{alg:constrained_hmc}. Here we use a quasi-Newton method for solving the system of equations in the projection step. The true Newton update would be
\begin{equation*}\label{eq:newton_iteration}\textstyle
    \vct{u}^{(t)} \leftarrow \vct{u}^{(t)} - 
    \left.\pd{\uconfunc}{\vct{u}}\right|_{\vct{u}^{(t-1)}}^\textsc{t}
    \lbr 
        \left.\pd{\uconfunc}{\vct{u}}\right|_{\vct{u}^{(t)}}
        \left.\pd{\uconfunc}{\vct{u}}\right|_{\vct{u}^{(t-1)}}^\textsc{t}
    \rbr^{-1}
    \uconfunc(\vct{u}^{(t)}).
\end{equation*}
This requires recalculating the Jacobian and solving a dense linear system within the optimisation loop. Instead we use a symmetric quasi-Newton update as proposed in \citep{barth1995algorithms}, the Jacobian Gram matrix $\pd{\uconfunc}{\vct{u}}\pd{\uconfunc}{\vct{u}}^\textsc{t}$ evaluated at the previous state used to condition the moves. This matrix is positive-definite and a Cholesky decomposition can be calculated outside the optimisation loop allowing cheaper quadratic cost solves within the loop. In the rare cases where the quasi-Newton iteration fails we fall back to a \textsc{minpack} \citep{more1980user} implementation of the robust Powell's Hybrid method \citep{powell1970hybrid}.

For larger systems, the Cholesky decomposition of the constraint Jacobian Gram matrix (line \ref{ln:chmc-cholesky}) will become a dominant cost, generally scaling cubically with $N_y$. The elementwise or autoregressive noise structures of many models however allows a significantly reduced quadratically scaling computational cost as explained in the supplementary material in Appendix \ref{sec:generator-structure-cholesky}.

The momentum updates in the \textsc{SimulateDynamic} routine require evaluating the gradient of the logarithm of the target density \eqref{eq:tgt-density-on-manifold}. This can by achieved by using automatic differentiation to calculate the gradient from the expression given in \eqref{eq:tgt-density-on-manifold}, however both the log-density and gradient can be more efficiently calculated by reusing the Cholesky decomposition of the constraint Jacobian Gram matrix computed in line \ref{ln:chmc-cholesky}. Details are given in Appendix \ref{sec:tgt-density-grad}.

In the \textsc{ProjectPosition} routine convergence is signalled when the elementwise maximum absolute value of the constraint function is below some tolerance $\epsilon$. This acts analogously to the $\epsilon$ parameter in \abbrref{ABC} methods, however here we typically set this parameter to some multiple of machine precision and so the approximation introduced is comparable to that otherwise incurred for using non-exact arithmetic.

A final implementation detail is the requirement to find some initial $\vct{u}$ satisfying $\uconfunc(\vct{u}) = \vct{0}$. In some cases structure in the generator function $\genfunc_y$ such as that described in Appendix \ref{sec:generator-structure-cholesky} can be leveraged to come up with an efficient non-iterative scheme for finding a constraint satisfying $\vct{u}$. For general generators, we can choose a subset of the inputs (or linear projections of the inputs) of dimensionality $N_y$ and plug the resulting system of equations into a black-box solver.

\section{Related work}\label{sec:related-work}

Closely related is the \emph{Constrained \abbrref{HMC}} method of \citep{brubaker2012family}. This demonstrates the validity of the constrained \abbrref{HMC} framework theoretically and experimentally, and we do not claim any original contribution in this respect. The focus in \citep{brubaker2012family} is on performing inference in distributions inherently defined on a fixed non-Euclidean manifold such as the unit sphere or space of orthogonal matrices.

Our work builds on \citep{brubaker2012family} by highlighting that conditioning on the output of a generator imposes a constraint on its inputs and so defines a density in input space restricted to some manifold. Unlike the cases considered in \citep{brubaker2012family} our constraints are therefore data-driven and the target density on the manifold implicitly defined by a generator function and base density.

\emph{Geodesic Monte Carlo} \citep{byrne2013geodesic} also considers applying a \abbrref{HMC} scheme to sample from non-linear manifolds embedded in a Euclidean space. Similarly to \citep{brubaker2012family} however the motivation is performing inference with respect to distributions explicitly defined on a manifold such as directional statistics. 

The method presented in \citep{byrne2013geodesic} uses an exact solution for the geodesic flow on the manifold. Our use of constrained Hamiltonian dynamics, and in particular the geodesic integration scheme of \citep{leimkuhler2016efficient}, can be considered an extension for cases when an exact geodesic solution is not available. Instead the geodesic flow is approximately simulated while still maintaining the required measure-preservation and reversibility properties for validity of the overall \abbrref{HMC} scheme.

\emph{Hamiltonian ABC} \cite{meeds2015hamiltonian}, also proposes applying \abbrref{HMC} to perform inference in simulator models as considered here. An {ABC} set-up is used with a Gaussian synthetic-likelihood formed by estimating moments from simulated data.

Rather than using automatic differentiation to exactly calculate gradients of the generator function, \emph{Hamiltonian ABC} uses a stochastic gradient estimator. This is based on previous work considering methods for using a stochastic gradients within \abbrref{HMC} \citep{welling2011bayesian,chen2014stochastic}. It has been suggested however that the use of stochastic gradients can destroy the favourable properties of Hamiltonian dynamics which enable coherent exploration of high dimensional state spaces \citep{betancourt2015fundamental}. 

In \emph{Hamiltonian ABC} it is also observed that representing the generative model as a deterministic function by fixing the random inputs to the generator is a useful method for improving exploration of the state space. This is achieved by including the seed of the pseudo-random number generator in the chain state rather than the set of random inputs.

Also related is \emph{Optimisation Monte Carlo} \citep{meeds2015optimization}. The authors propose using an optimiser to find parameters of a simulator model consistent with observed data (to within some tolerance $\epsilon$) given fixed random inputs sampled independently. The optimisation is not measure-preserving and so the Jacobian of the map is approximated with finite differences to weight the samples. Our method also uses an optimiser to find inputs consistent with the observations, however by using a measure-preserving dynamic we avoid having to re-weight samples which can scale poorly with dimensionality. 

Our method also differs in treating all inputs to a generator equivalently; while the \emph{Optimisation Monte Carlo} authors similarly identify the simulator models as deterministic functions they distinguish between parameters and random inputs, optimising the first and independently sampling the latter. This can lead to random inputs being sampled for which no parameters can be found consistent with the observations (even with a `soft' within $\epsilon$ constraint). Although optimisation failure is also potentially an issue for our method, as we jointly optimise all inputs and are approximating some exact continuous time constraint-satisfying dynamic we found this occurred rarely in practice if an appropriate step size is chosen. Our method can also be applied in cases were the parameter dimension is greater than the dimension of the observations unlike \emph{Optimization Monte Carlo}.

\section{Experiments}\label{sec:experiments}

To illustrate the general applicability of our method we performed inference tasks in three diverse settings: parameter inference in a stochastic Lotka-Volterra predator-prey model simulation, 3D human pose and camera parameter inference given 2D joint position information and finally in-painting of missing regions of digit images using a generative model trained on \abbrref{MNIST}. In all three experiments Theano \citep{theano2016theano} was used to specify the generator function and calculate the required derivatives. All experiments were run on an Intel Core i5-2400 quad-core CPU. Python code for the experiments is available at \url{https://git.io/dgm}.

\subsection{Lotka--Volterra parameter inference}

\begin{figure*}[t]
\centering
\begin{subfigure}[b]{0.217\textwidth}
  \includegraphics[width=\textwidth]{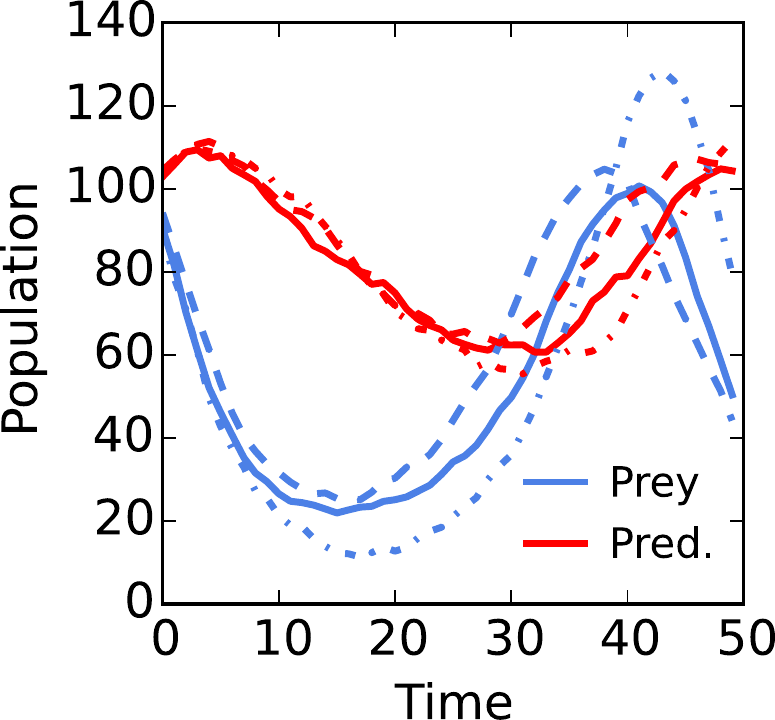}
  \caption{}
  \label{sfig:lotka-volterra-sims}
\end{subfigure}
~~
\begin{subfigure}[b]{0.345\textwidth}
  \includegraphics[width=\textwidth]{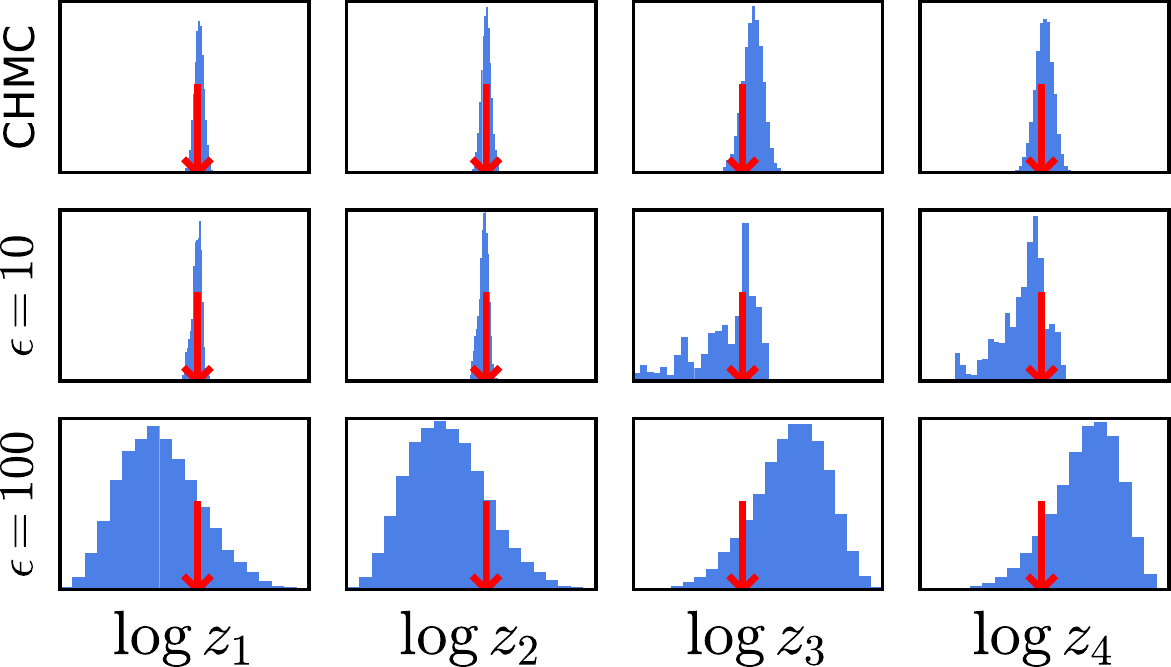}
  \caption{}
  \label{sfig:lotka-volterra-marginals}
\end{subfigure}
~~
\begin{subfigure}[b]{0.347\textwidth}
  \includegraphics[width=\textwidth]{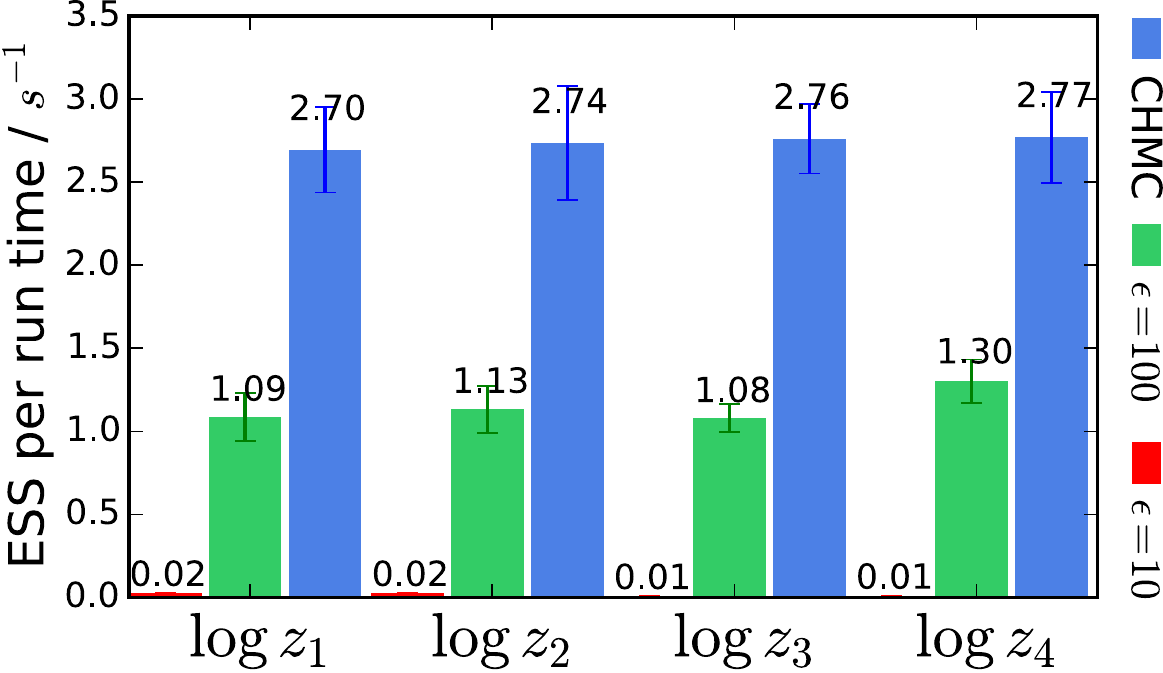}
  \caption{}
  \label{sfig:lotka-volterra-ess}
\end{subfigure}
\caption{\textbf{Lotka--Volterra}~~ (a) Observed predator-prey populations (solid) and \abbrref{ABC} sample trajectories with $\epsilon=10$ (dashed) and $\epsilon=100$ (dot-dashed).
(b) Marginal empirical histograms for the (logarithm of the) four parameters (columns) from constrained \abbrref{HMC} samples (top) and \abbrref{ABC} samples with $\epsilon=10$ (middle) and $\epsilon=100$ (bottom). Horizontal axes shared across columns. Red arrows indicate true parameter values. (c) Mean \abbrref{ESS} normalised by compute time for each of four parameters for \abbrref{ABC} with $\epsilon=10$ (red), $\epsilon=100$ (green) and our method (blue). Error bars show $\pm 3$ standard errors of mean.}
\label{fig:lotka-volterra}
\end{figure*}

As a first demonstration we considered a stochastic continuous state variant of the Lotka--Volterra model, a common example problem for \abbrref{ABC} methods e.g. \citep{meeds2015optimization}. In particular we consider parameter inference given a simulated solution of the following stochastic differential equations
\begin{align}\label{eq:lotka-volterra}
    \dr y_1 &= \lpa z_1 y_1 - z_2 y_1 y_2 \rpa \dr t + \dr n_1,
    \\
    \dr y_2 &= \lpa -z_3 y_2 + z_4 y_1 y_2 \rpa \dr t + \dr n_2,
\end{align}
where $y_1$ represents the prey population, $y_2$ the predator population, $\fset{z_i}_{i=1}^4$ the system parameters and $n_1$ and $n_2$ zero-mean, unit variance white noise processes.

The observed data was generated with an Euler-Maruyama discretisation, time-step $1$, initial condition $y_1^{(0)} = y_2^{(0)} = 100$ and $z_1=0.4$, $z_2 = 0.005$,  $z_3=0.05$, $z_4=0.001$ (chosen to give stable dynamics). The simulation was run for $50$ time-steps with the observed outputs defined as the concatenated vector $\vct{y} = {\scriptstyle \lsb y_1^{(1)} ~ y_2^{(1)} ~ \dots ~ y_1^{(50)} ~ y_2^{(50)} \rsb}$. Log-normal priors $z_i \sim \log \nrm{-2, 1}$ were place on the system parameters.

We compared our method to various \abbrref{ABC} approaches (\S\ref{sec:abc}) using a uniform ball kernel with radius $\epsilon$. \abbrref{ABC} rejection failed catastrophically, with no acceptances in $10^6$ samples even with a large $\epsilon = 1000$. \abbrref{ABC} \abbrref{MCMC} with a Gaussian proposal distribution $q$ also performed very poorly with the dynamic having zero acceptances over multiple runs of $10^5$ updates for $\epsilon=100$ and getting stuck at points in parameter space over many updates for larger $\epsilon=1000$, even with small proposal steps. Based on a method proposed in the pseudo-marginal literature \citep{murray2015pseudo}, we tried using alternating elliptical slice sampling updates of the random inputs used to generate the parameters $\rvct{u}_z$ and remaining random inputs $\rvct{u}_y$ used to generate the simulated observations given parameters. The slice sampling updates locally adapt the size of steps made to ensure a move can always be made. Using this method we were able to obtain reasonable convergence over long runs for both $\epsilon=100$ and $\epsilon=10$.

The results are summarised in Figure \ref{fig:lotka-volterra}. Figure \ref{sfig:lotka-volterra-sims} shows the simulated data used as observations and \abbrref{ABC} sample trajectories for $\epsilon=10$ and $\epsilon=100$ . Though both samples follow the general trends of the observed data there are large discrepancies particularly for $\epsilon=100$. Our method in contrast samples parameters generating trajectories \emph{exactly} matching the observations at all points. Figure \ref{sfig:lotka-volterra-marginals} shows the marginal histograms for the parameters. The inferred posterior on the parameters are significantly more tightly distributed about the true values used to generate the observations for our approach and the $\epsilon=10$ case compared to the results for $\epsilon=100$; even for the $\epsilon=10$ case however it can be seen that there are spurious appearing peaks in the distributions for $z_3$ and $z_4$.

Figure \ref{sfig:lotka-volterra-ess} shows the relative sampling efficiency of our approach against the \abbrref{ABC} methods, as measured by the \abbrdef{effective sample sizes}{ESS} (computed with R-CODA \citep{plummer2006coda}) normalised by run time averaged across 10 sampling runs for each method. Despite the significantly higher per-update cost in our method, the coherent movement about the state space afforded by the Hamiltonian dynamic gave significantly better performance even over the very approximate $\epsilon=100$ case. 

\subsection{Human pose and camera model inference}

\begin{figure*}[t]
\centering
\begin{subfigure}[b]{0.58\textwidth}
  \includegraphics[width=\textwidth]{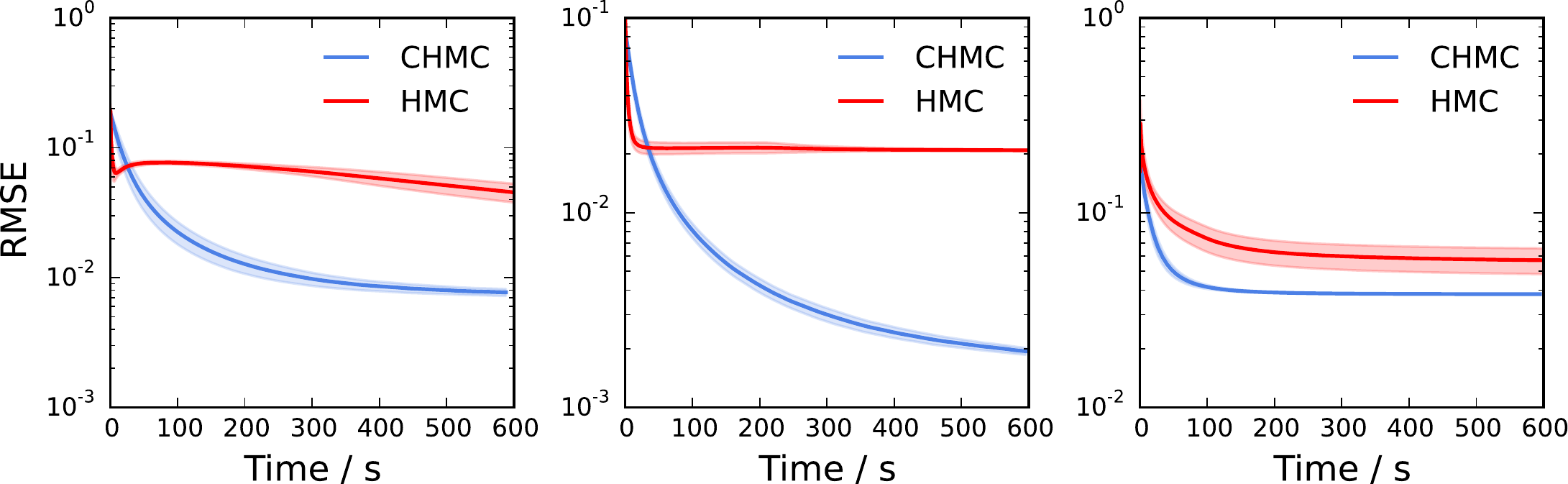}
  \caption{}
  \label{sfig:pose-binocular-rmses}
\end{subfigure}
~~
\begin{subfigure}[b]{0.35\textwidth}
  \includegraphics[width=\textwidth]{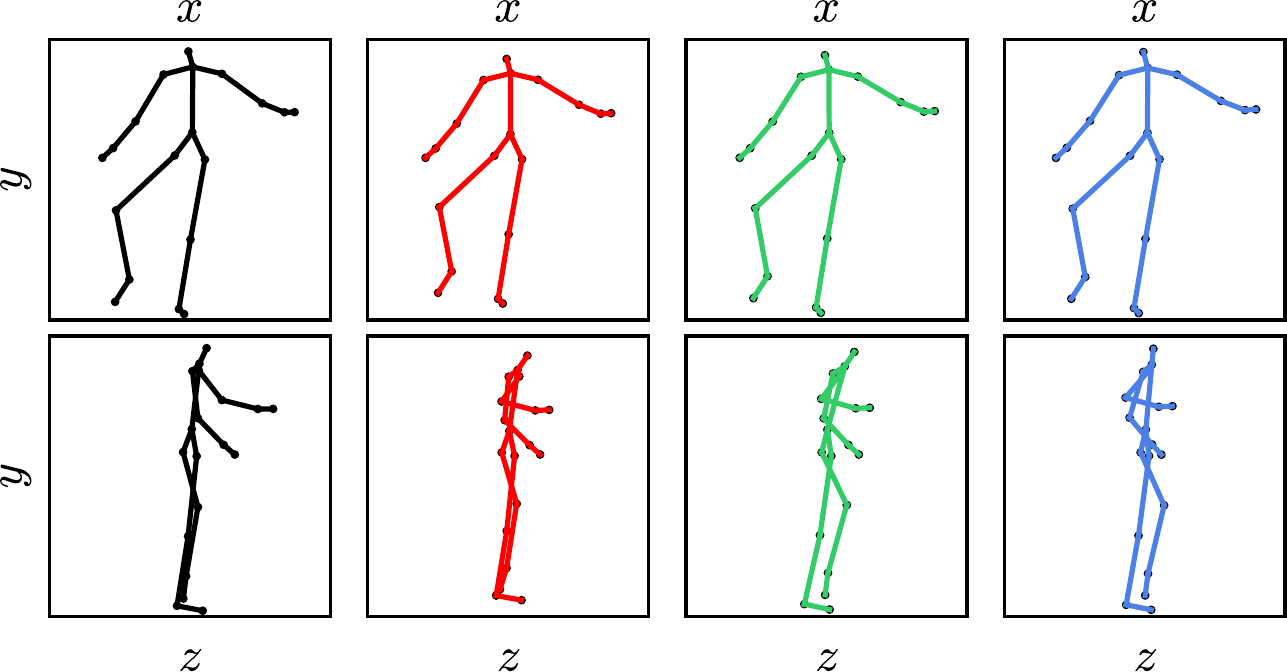}
  \caption{}
  \label{sfig:pose-monocular-samples}
\end{subfigure}
\caption{\textbf{Human pose}~ (a) \abbrref{RMSE}s of 3D pose posterior mean estimates given binocular projections, using samples from our method (blue) versus running \abbrref{HMC} in hierarchical model (red) for three different scenes sampled from the prior. Horizontal axes show computation time to produce number of samples in estimate. Solid curves are average \abbrref{RMSE} over 10 runs with different seeds and shaded regions show $\pm 3$ standard errors of mean. (b) Orthographic projections (top: front view, bottom: side view) of 3D poses consistent with monocular projections. Left most pair (black) shows pose used to generate observations, right three show constrained \abbrref{HMC} samples.}
\label{fig:pose-inference}
\end{figure*}

For our next experiment we considered inferring a 3D human pose and camera model from 2D projection(s) of joint positions. We used a 19 joint skeleton model, with a prior density over poses parametrised by 47 local joint angles $\rvct{z}_a$ learnt from the \emph{PosePrior} motion capture data-set \citep{akhter2015pose} with a \abbrref{VAE} with a 30 dimensional hidden representation $\rvct{h}$. For the bone lengths $\rvct{z}_b$, a log-normal model was fitted to data from the \emph{ANSUR} anthropometric data-set \citep{gordon1988ansur}, due to symmetry 13 independent lengths being specified. A simple pin-hole projective camera model with 3 position parameters $\rvct{z}_c$ and fixed focal-length was used\footnote{The camera orientation was assumed fixed to avoid replicating the degrees of freedom specified by the angular orientation of the root joint of the skeleton: only the relative camera--skeleton orientation is important.}. A log-normal prior was placed on the depth co-ordinate $\rvar{z}_{c,3}$ to enforce positivity with normal priors on the other two co-ordinates $\rvar{z}_{c,1}$ and $\rvar{z}_{c,2}$. A small amount of Gaussian noise was added to the projected positions to give the observed 2D joint positions $\rvct{y}$. This ensured the constraint Jacobian was full row-rank everywhere and gave a known hierarchical joint density on $\fset{\rvct{y},\,\rvct{h},\,\rvct{z}_a,\,\rvct{z}_b,\,\rvct{z}_c}$ allowing comparison with HMC as a baseline.

We first considered binocular pose estimation, with the 3D scene information inferred given 2D projections from two cameras with a known offset between them. In this stereo vision case, the disparity between the projections gives information about the depth direction and so we would expect the posterior distribution on the 3D pose to be tightly distributed around the true values used to generate the observations. We compared our constrained \abbrref{HMC} method to running standard \abbrref{HMC} on the conditional density of $\fset{\rvct{h},\,\rvct{z}_a,\,\rvct{z}_b,\,\rvct{z}_c}$ given $\rvct{y}$.

Figure \ref{sfig:pose-binocular-rmses} shows the \abbrdef{root mean squared error}{RMSE} between the posterior mean estimate of the 3D joint positions and the true positions used to generate the observations as the number of samples included in the estimate increases for three test scenes. For both methods the horizontal axis has been scaled by run time.

The constrained \abbrref{HMC} method (blue curves) tends to give significantly more accurate estimates particularly over longer periods. Visually inspecting the sampled poses and individual run traces (not shown) it seems that the \abbrref{HMC} runs tended to often get stuck in local modes corresponding to a subset of joints being `incorrectly' positioned while still broadly matching the (noisy) projections. The complex dependencies of the joint positions on the angle parameters mean the dynamic struggles to find an update which brings the `incorrect' joints closer to their true positions without moving other joints out of line. The constrained \abbrref{HMC} method seemed to be less susceptible to this issue.

We also considered inferring 3D scene information from a single 2D projection. Monocular projection is inherently information destroying with significant uncertainty to the true pose and camera parameters which generated the observations. Figure \ref{sfig:pose-monocular-samples} shows pairs of orthographic projections of 3D poses: the left most column is the pose used to generate the projection conditioned on and the right three columns are poses sampled using constrained \abbrref{HMC} consistent with the observations. The top row shows front $x$--$y$ views, corresponding to the camera view though with a orthographic rather than perspective projection, the bottom row shows side $z$--$y$ views with the $z$ axis the depth from the camera. The dynamic is able to move between a range of plausible poses consistent with the observations while reflecting the inherent depth ambiguity from the monocular projection.

\subsection{MNIST in-painting}

\begin{figure*}
\centering
\begin{subfigure}[b]{0.48\textwidth}
  \includegraphics[width=\textwidth]{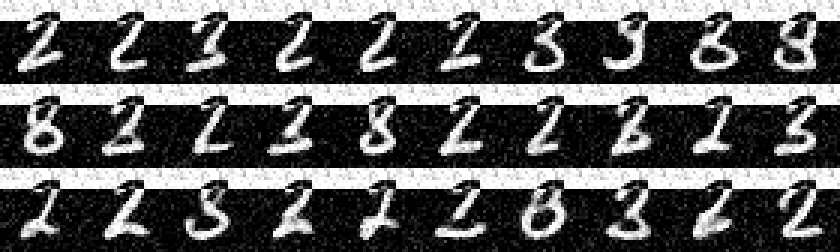}
  \caption{Constrained HMC samples}
  \label{sfig:mnist-samples-chmc}
\end{subfigure}
~~
\begin{subfigure}[b]{0.48\textwidth}
  \includegraphics[width=\textwidth]{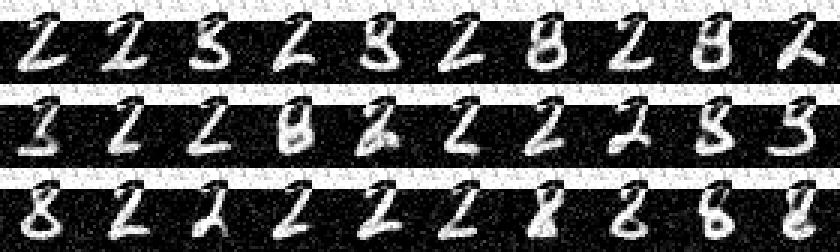}
  \caption{HMC in hierarchical model samples}
  \label{sfig:mnist-samples-hmc}
\end{subfigure}
\caption{\textbf{MNIST}~ In-painting samples. The top black-on-white quarter of each image is the fixed observed region and the remaining white-on-black region the proposed in-painting. In left-right scan order the images in (a) are consecutive samples from a run; in (b) the images are every 40\textsuperscript{th} sample to account for the quicker updates.}
\label{fig:mnist}
\end{figure*}

As a final task we considered inferring in-paintings for a missing region of a digit image $\rvct{z}$ given knowledge of the rest of the pixel values $\rvct{y}$. A Gaussian \abbrref{VAE} trained on \abbrref{MNIST} was used as the generative model, with a 50-dimensional hidden code $\rvct{h}$. We compared our method to running \abbrref{HMC} in the known conditional density on $\rvct{h}$ given $\rvct{y}$ ($\rvct{z}$ can then be directly sampled from its Gaussian conditional given $\rvct{h}$).

Example samples are shown in Figure \ref{fig:mnist}. In this case the constrained and standard \abbrref{HMC} approaches appear to be performing similarly, with both able to find a range of plausible in-paintings given the observed pixels. Without cost adjustment the standard \abbrref{HMC} samples show greater correlation between subsequent updates, however for a fairer comparison the samples shown were thinned to account for the approximately $40\times$ larger run-time per constrained \abbrref{HMC} sample. Although the constrained dynamic does not improve efficiency here neither does it seem to hurt it.

\vspace{-1mm}
\section{Discussion}

We have presented a generally applicable framework for performing inference in differentiable generative models. Though simulating the constrained Hamiltonian dynamic is computationally costly, the resulting coherent exploration of the state space can lead to significantly improved sampling efficiency over alternative methods.

Further our approach allows asymptotically exact inference in differentiable generative models where \abbrref{ABC} methods might otherwise be used. We suggest an approach for dealing with two of the key issues in \abbrref{ABC} methods --- enabling inference in continuous spaces as $\epsilon$ collapses to zero and allowing efficient inference when conditioning on high-dimensional observations without the need for dimensionality reduction with summary statistics (and the resulting task of choosing appropriate summary statistics). As well as being of practical importance itself, this approach should be useful in providing `ground truth' inferences in more complex models to assess the affect of the approximations used in \abbrref{ABC} methods on the quality of the inferences.

In molecular simulations, constrained dynamics are often used to improve efficiency. Intra-molecular motion is removed by fixing bond lengths. This allows a larger time-step to be used due to the removal of high-frequency bond oscillations \citep{leimkuhler2016efficient}. An analogous effect is present when performing inference in an \abbrref{ABC} setting with a $\epsilon$ kernel `soft-constraint' to enforce consistency between the inputs and observed outputs. As $\epsilon \to 0$ the scales over which the inputs density changes value in directions orthogonal to the constraint manifold and along directions tangential to the manifold increasingly differ. To stay within the soft constraint a very small step-size needs to be used. Using a constrained dynamic decouples the motion on the constraint manifold from the steps to project on to it, allowing more efficient larger steps to be used for moving on the manifold.

A limitation of our method is the requirement of differentiability of the generator. This prevents applying our approach to generative models which use discontinuous operations or discrete random inputs. In some cases conditioned on fixed values of discrete random inputs the generator may still be differentiable and so the proposed method can be used to update the continuous random inputs given values of the discrete inputs. This would need to be alternated with updates to the discrete inputs, which would require devising methods for updating the discrete inputs to the generator while constraining its output to exactly match observations.

A common technique in \abbrref{ABC} methods is to define a kernel or distance measure in terms of summary statistics of the observed data \citep{marin2012approximate}. This is necessary in standard \abbrref{ABC} approaches to cope with the `curse of dimensionality' with the probability of accepting samples / moves for a fixed tolerance $\epsilon$ exponentially decreasing as the dimensionality of the observations conditioned on increases. Although as already noted the proposed method is better able to cope with high observation dimensions, if appropriate informative statistics are available (e.g. based on expert knowledge) and these are differentiable functions of the generator outputs, they can be easily integrated in to the proposed method by absorbing the statistic computation in to the definition of $\genfunc_y$.

\subsubsection*{Acknowledgements}

Thanks to Iain Murray for several useful discussions and to Ben Leimkuhler for pointing out the potential of using a geodesic integration scheme. This work was supported in part by grants EP/F500385/1 and  BB/F529254/1 for the University of Edinburgh School of Informatics Doctoral Training Centre in Neuroinformatics and Computational Neuroscience (www.anc.ac.uk/dtc) from the UK Engineering and Physical Sciences Research Council (EPSRC), UK Biotechnology and Biological Sciences Research Council (BBSRC), and the UK Medical Research Council (MRC).  

\pagebreak
\onecolumn
\appendix

\section{Computing conditional expectations in the input space}\label{sec:cond-expc-deriv}

We here give a derivation for the target density on the constraint manifold in the input space of a differentiable generative model for computing expectations conditional on observations of the output. This is largely a restatement of results in \citep[\S 2.3]{diaconis2013sampling} and is provided mainly to make those results more easily relatable to the notation of this paper.

For clarity in this section the measure being integrated with respect to will be explicitly denoted. For a variable of integration $\vct{x}$, $\lebm{D}{\vct{x}}$ will denote the $D$ dimensional Lebesgue measure and $\haum{D}{\vct{x}}$ the $D$ dimensional Hausdorff measure over some space $\set{X}$ which will be specified.

A key result we will use is Federer's \emph{Co-Area Formula} \citep[\S 3.2.12]{federer2014geometric}:

\begin{theorem}[Co-Area Formula]
Let $\vctfunc{m} : \set{X} \subseteq \reals^L \to \set{V} \subseteq \reals^K$ be Lipschitz with $L > K$ and $\func{h} : \set{X} \to \reals$ be Lebesgue measurable. Then
\begin{equation}\label{eq:co-area-formula}
    \int_{\set{X}} 
      \func{h}(\vct{x}) \, \left|\pd{\vctfunc{m}}{\vct{x}}\pd{\vctfunc{m}}{\vct{x}}\tr\right|^{\frac{1}{2}}
    \,\lebm{L}{\vct{x}}
    =
    \int_{\set{V}}
      \int_{\vctfunc{m}^{-1}(\vct{v})} \func{h}(\vct{x}) \,\haum{L-K}{\vct{x}}
    \,\lebm{K}{\vct{v}}   
\end{equation}
with $\pd{\vctfunc{m}}{\vct{x}} = \lsb \pd{\func{m}_i}{x_j} \rsb_{i,j}$ the Jacobian of the map, $\vctfunc{m}^{-1}(\vct{v})$ the $L-K$ dimensional sub-manifold embedded in $\set{X}$ with Hausdorff measure $\haum{L-K}{\vct{x}}$, which is the solution set $\fset{\vct{x} \in \set X : \vctfunc{m}(\vct{x}) = \vct{v}}$. 
\end{theorem}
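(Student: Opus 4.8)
The plan is to establish the identity in the standard stages of geometric measure theory, building up from indicator functions and linear maps to the general Lipschitz case. First I would reduce to the \emph{set version}: it suffices to prove \eqref{eq:co-area-formula} when $\func{h} = \mathbf{1}_{A}$ for an arbitrary Lebesgue measurable $A \subseteq \set{X}$, namely
\[
  \int_{A} \left|\pd{\vctfunc{m}}{\vct{x}}\pd{\vctfunc{m}}{\vct{x}}\tr\right|^{\frac{1}{2}} \lebm{L}{\vct{x}}
  = \int_{\set{V}} \mathcal{H}^{L-K}\!\lpa A \cap \vctfunc{m}^{-1}(\vct{v})\rpa \,\lebm{K}{\vct{v}}.
\]
The general measurable $\func{h}$ then follows by the usual machine: linearity gives the claim for nonnegative simple functions, the monotone convergence theorem extends it to nonnegative measurable $\func{h}$, and writing $\func{h} = \func{h}^{+} - \func{h}^{-}$ handles the signed case. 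I would also record at the outset that, by Rademacher's theorem, the Lipschitz map $\vctfunc{m}$ is differentiable $\lebm{L}{}$-almost everywhere, so the coarea factor $\left|\pd{\vctfunc{m}}{\vct{x}}\pd{\vctfunc{m}}{\vct{x}}\tr\right|^{\frac{1}{2}}$ is defined a.e.

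Second, I would prove the purely linear case: for a fixed linear map $\mtx{B} : \reals^L \to \reals^K$ with $L > K$ and any measurable $A$,
\[
  \int_{\reals^K} \mathcal{H}^{L-K}\!\lpa A \cap \mtx{B}^{-1}(\vct{v})\rpa \,\lebm{K}{\vct{v}}
  = \left|\mtx{B}\mtx{B}\tr\right|^{\frac{1}{2}} \lebm{L}{A}.
\]
Using the singular value decomposition $\mtx{B} = \mtx{o}\,\mtx{d}\,\mtx{q}\tr$ with $\mtx{o},\mtx{q}$ orthogonal and $\mtx{d}$ the matrix of singular values, the orthogonal changes of variable (which preserve both Lebesgue and Hausdorff measure) reduce this to a diagonal scaling composed with the coordinate projection $\reals^{L} \to \reals^{K}$. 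For that model map the fibres are affine $(L-K)$-planes and the identity is exactly Fubini's theorem, with $\left|\mtx{B}\mtx{B}\tr\right|^{\frac{1}{2}}$ equal to the product of the nonzero singular values.

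Third comes the localization step that transfers the linear identity to the nonlinear map. On the set where $\pd{\vctfunc{m}}{\vct{x}}$ has full rank $K$, I would partition (up to null sets) into countably many Borel pieces on each of which $\vctfunc{m}$ is approximated, to within a factor $(1+\eta)$, by the affine map determined by its differential at a base point; applying the linear result on each piece, summing, and letting $\eta \to 0$ yields the formula there. On the complementary \emph{critical set} where $\pd{\vctfunc{m}}{\vct{x}}$ drops rank, the coarea factor vanishes, so the left side contributes nothing; I would show the right side also vanishes by a Sard-type argument, using an Eilenberg-inequality bound to control $\int \mathcal{H}^{L-K}\!\lpa \vctfunc{m}^{-1}(\vct{v}) \cap E\rpa\,\lebm{K}{\vct{v}}$ on pieces where the differential degenerates and force the fibre integral there to zero.

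I expect the third step to be the main obstacle. The delicate points are making the ``approximately affine on each piece'' decomposition quantitative --- simultaneously controlling the distortion of the $(L-K)$-dimensional Hausdorff measure of the fibres, the $K$-dimensional measure of the image, and the coarea factor, uniformly on each piece --- and handling the critical set rigorously so that no fibre mass is lost where the Jacobian degenerates. The measure-theoretic bookkeeping (measurability of $\vct{v} \mapsto \mathcal{H}^{L-K}(A \cap \vctfunc{m}^{-1}(\vct{v}))$ and the interchange of the limit with the integrals) is routine by comparison but must be carried out to make the approximation argument valid.
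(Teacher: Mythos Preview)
Your outline is a reasonable sketch of the standard geometric-measure-theory proof of the co-area formula (reduction to characteristic functions, the linear/Fubini case via SVD, Lipschitz linearisation on small pieces, and the Morse--Sard-type handling of the critical set). However, there is nothing in the paper to compare it against: the paper does not prove this statement. It is quoted verbatim as a classical result, with a citation to Federer \citep[\S 3.2.12]{federer2014geometric}, and then used as a black box (via its corollary \eqref{eq:co-area-formula-corollary}) inside the proof of the subsequent theorem on conditional expectations. So your proposal is not wrong, but it is supplying a proof the paper deliberately omits; if your goal is to match the paper, you should simply cite Federer and move on.
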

\begin{corollary}
If the Jacobian of $\vctfunc{m}$ has full row-rank everywhere such that $\left|\pd{\vctfunc{m}}{\vct{x}} \pd{\vctfunc{m}}{\vct{x}}\tr\right| > 0 ~\forall \vct{x} \in \set{X}$ then for a Lebesgue measurable $\func{h}^\ast : \set{X} \to \reals$
\begin{equation}\label{eq:co-area-formula-corollary}
    \int_{\set{X}} 
      \func{h}^\ast(\vct{x}) 
    \,\lebm{L}{\vct{x}}
    =
    \int_{\set{V}}
      \int_{\vctfunc{m}^{-1}(\vct{v})} 
        \func{h}^\ast(\vct{x}) \,
        \left|\pd{\vctfunc{m}}{\vct{x}}\pd{\vctfunc{m}}{\vct{x}}\tr\right|^{-\frac{1}{2}}
      \,\haum{L-K}{\vct{x}}
    \,\lebm{K}{\vct{v}}   
\end{equation}
which can be easily shown by setting $\func{h}(\vct{x}) = \func{h}^\ast(\vct{x}) \left|\pd{\vctfunc{m}}{\vct{x}}\pd{\vctfunc{m}}{\vct{x}}\tr\right|^{-\frac{1}{2}}$ in \eqref{eq:co-area-formula}.
\end{corollary}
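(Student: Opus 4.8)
The plan is to obtain the corollary directly from the Co-Area Formula by an appropriate choice of integrand $\func{h}$, exactly as the hint suggests. The whole argument hinges on the full row-rank hypothesis, which guarantees that $\left|\pd{\vctfunc{m}}{\vct{x}}\pd{\vctfunc{m}}{\vct{x}}\tr\right| > 0$ for every $\vct{x} \in \set{X}$, so that its inverse square root $\left|\pd{\vctfunc{m}}{\vct{x}}\pd{\vctfunc{m}}{\vct{x}}\tr\right|^{-\frac{1}{2}}$ is a well-defined, finite, strictly positive function on all of $\set{X}$. Without this positivity, division by the square-root determinant could fail on the set where the Jacobian degenerates, so the hypothesis is doing precisely the work of making the rescaling legitimate.

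First I would define
\[
  \func{h}(\vct{x}) = \func{h}^\ast(\vct{x}) \, \left|\pd{\vctfunc{m}}{\vct{x}}\pd{\vctfunc{m}}{\vct{x}}\tr\right|^{-\frac{1}{2}}
\]
and check that this $\func{h}$ satisfies the hypotheses of the theorem, i.e.\ that it is Lebesgue measurable. Since $\vctfunc{m}$ is Lipschitz, Rademacher's theorem guarantees that the Jacobian $\pd{\vctfunc{m}}{\vct{x}}$ exists for almost every $\vct{x}$ and is a measurable function of $\vct{x}$; hence the Gram determinant $\left|\pd{\vctfunc{m}}{\vct{x}}\pd{\vctfunc{m}}{\vct{x}}\tr\right|$, being a polynomial in the entries of the Jacobian, is measurable, and by the positivity hypothesis its inverse square root is too. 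The product of this with the measurable $\func{h}^\ast$ is then measurable, so the Co-Area Formula applies to $\func{h}$.

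Next I would substitute this choice into \eqref{eq:co-area-formula}. On the left-hand side the factor $\left|\pd{\vctfunc{m}}{\vct{x}}\pd{\vctfunc{m}}{\vct{x}}\tr\right|^{-\frac{1}{2}}$ carried by $\func{h}$ cancels exactly against the $\left|\pd{\vctfunc{m}}{\vct{x}}\pd{\vctfunc{m}}{\vct{x}}\tr\right|^{\frac{1}{2}}$ already present in the formula, the cancellation being valid pointwise thanks to positivity, leaving $\int_{\set{X}} \func{h}^\ast(\vct{x}) \,\lebm{L}{\vct{x}}$. On the right-hand side $\func{h}$ simply carries its determinant factor inside the inner Hausdorff integral, producing $\int_{\set{V}} \int_{\vctfunc{m}^{-1}(\vct{v})} \func{h}^\ast(\vct{x}) \left|\pd{\vctfunc{m}}{\vct{x}}\pd{\vctfunc{m}}{\vct{x}}\tr\right|^{-\frac{1}{2}} \,\haum{L-K}{\vct{x}} \,\lebm{K}{\vct{v}}$. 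Equating the two sides yields \eqref{eq:co-area-formula-corollary}.

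There is no genuinely hard step here: this is a one-line substitution dressed up as a corollary. The only points requiring a moment's care are the measurability of the rescaled integrand $\func{h}$ and the observation that the positivity assumption is exactly what is needed both to define the inverse square root everywhere and to justify the pointwise cancellation on the left-hand side. This is where the full row-rank hypothesis does its work, and I would expect it to be the one place the write-up must be explicit rather than the algebra itself.
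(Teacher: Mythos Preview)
Your proposal is correct and follows exactly the approach the paper takes: substitute $\func{h}(\vct{x}) = \func{h}^\ast(\vct{x})\,\left|\pd{\vctfunc{m}}{\vct{x}}\pd{\vctfunc{m}}{\vct{x}}\tr\right|^{-\frac{1}{2}}$ into the Co-Area Formula and use the full row-rank hypothesis to ensure this is well defined. The paper treats this as a one-line remark; your additional care about measurability via Rademacher's theorem is a reasonable embellishment but not something the paper spells out.
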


We will also use what is sometimes termed the \abbrdef{Law of the Unconscious Statistician}{LOTUS} to express expectations of functions of random (vector) variables when an explicit density on the random output of the function is not known.
\begin{theorem}[Law of the Unconscious Statistician]
Let $\rvct{y}$ be a random vector on support $\set{X} \subseteq \reals^N$ with density $\pden{\rvct{x}}{\vct{x}}$ with respect to the Lebesgue measure $\lebm{N}{\vct{x}}$ and $\func{f}: \set{Y} \to \reals$ be Lebesgue measurable. If we define a new random variable $\rvar{v} = \func{f}(\rvct{x})$ then
\begin{equation}\label{eq:lotus}
    \expc{}{\rvar{v}} = \expc{}{\func{f}(\rvct{x})} =
    \int_{\set{X}} \func{f}(\vct{x}) \, \pden{\rvct{x}}{\vct{x}} \,\lebm{N}{\vct{x}}.
\end{equation}
\end{theorem}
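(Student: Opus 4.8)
The plan is to prove this by the standard measure-theoretic ``standard machine'' argument, reducing the claim to the abstract change-of-variables formula for pushforward measures and then invoking the hypothesis that $\rvct{x}$ has a Lebesgue density. First I would recall that the expectation of the derived variable $\rvar{v} = \func{f}(\rvct{x})$ is, \emph{by definition}, the integral of the identity against the law (pushforward measure) $\mu_{\rvar{v}} = \func{f}_\ast\mu_{\rvct{x}}$ of $\rvar{v}$, where $\mu_{\rvct{x}}$ denotes the distribution of $\rvct{x}$ on $\set{X}$. The whole content then reduces to establishing $\int_{\reals} t\,\mu_{\rvar{v}}(\dr t) = \int_{\set{X}} \func{f}(\vct{x})\,\mu_{\rvct{x}}(\dr\vct{x})$, after which the final step substitutes $\mu_{\rvct{x}}(\dr\vct{x}) = \pden{\rvct{x}}{\vct{x}}\,\lebm{N}{\vct{x}}$.

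The core identity I would establish by building $\func{f}$ up through the usual hierarchy of measurable functions. For an indicator $\func{f} = \mathbf{1}_{B}$ of a Borel set $B\subseteq\reals$, both sides equal $\mu_{\rvar{v}}(B) = \mu_{\rvct{x}}(\func{f}^{-1}(B))$, which is precisely the defining property of the pushforward measure; this is the only place that property of $\mu_{\rvar{v}}$ is used. Linearity of the integral extends the identity to nonnegative simple functions, and the Monotone Convergence Theorem extends it to arbitrary nonnegative measurable $\func{f}$ by approximating from below with an increasing sequence of simple functions. Finally, writing a general $\func{f}$ as $\func{f}^{+} - \func{f}^{-}$ and applying the nonnegative case to each part gives the result, provided $\expc{}{|\func{f}(\rvct{x})|} < \infty$ so that the difference is well defined and no $\infty - \infty$ indeterminacy arises.

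With the abstract identity in hand, the last step is immediate: since $\rvct{x}$ admits the density $\pden{\rvct{x}}{\vct{x}}$ with respect to $\lebm{N}{\vct{x}}$, by definition $\mu_{\rvct{x}}(A) = \int_{A}\pden{\rvct{x}}{\vct{x}}\,\lebm{N}{\vct{x}}$ for every measurable $A$, so integrating $\func{f}$ against $\mu_{\rvct{x}}$ is the same as integrating $\func{f}\,\pden{\rvct{x}}{\vct{x}}$ against Lebesgue measure, yielding \eqref{eq:lotus}.

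I expect no serious obstacle, since this is a classical result; the only points requiring care are bookkeeping ones. The main subtlety is keeping the two notions of expectation distinct at the outset --- the abstract integral defining $\expc{}{\rvar{v}}$ versus the concrete Lebesgue integral on $\set{X}$ --- and recognising that they coincide only \emph{through} the pushforward identity rather than by definition. A secondary point is the integrability hypothesis needed to pass from the nonnegative case to signed $\func{f}$; I would state this explicitly, noting that for nonnegative $\func{f}$ the identity holds unconditionally as an equality in $[0,\infty]$.
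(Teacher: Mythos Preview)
Your proof is correct and is exactly the standard measure-theoretic argument for this classical result. Note, however, that the paper does not actually prove this theorem: it is stated without proof as a well-known fact (the \emph{Law of the Unconscious Statistician}) and then invoked in the derivation of the main result. So there is no ``paper's own proof'' to compare against; your argument simply supplies the textbook justification that the paper omits.
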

\begin{corollary}
If $\rvct{x}$ is defined as $\rvct{x} = \vctfunc{g}(\rvct{u})$ for some $\vctfunc{g} : \set{U} \subseteq \reals^M \to \set{X}$ then
\begin{equation}\label{eq:lotus-corollary}
    \expc{}{\func{f}(\rvct{x})} = \expc{}{(\func{f}\,\circ\,\vctfunc{g})(\rvct{u})} =
    \int_{\set{U}} \func{f}\,\circ\,\vctfunc{g}(\vct{u}) \, \pden{\rvct{u}}{\vct{u}} \,\lebm{M}{\vct{u}}.
\end{equation}
\end{corollary}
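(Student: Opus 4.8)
The plan is to obtain this corollary as an immediate consequence of the \abbrref{LOTUS} theorem \eqref{eq:lotus} already established, applied not to $\rvct{x}$ but to the random vector $\rvct{u}$ together with the composite function $\func{f}\,\circ\,\vctfunc{g}$. The observation that drives everything is that, because $\rvct{x} = \vctfunc{g}(\rvct{u})$ holds pointwise, the random variable whose expectation we seek is literally identical to the composite: $\func{f}(\rvct{x}) = \func{f}\lpa\vctfunc{g}(\rvct{u})\rpa = (\func{f}\,\circ\,\vctfunc{g})(\rvct{u})$. Hence no new integration formula is needed --- only a relabelling of which variable is regarded as the argument of \abbrref{LOTUS}.

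Concretely, I would proceed in two steps. First, define $\func{h} = \func{f}\,\circ\,\vctfunc{g} : \set{U} \to \reals$ and note that $\func{f}(\rvct{x})$ and $\func{h}(\rvct{u})$ are the same random variable, so their expectations coincide; this gives the first equality $\expc{}{\func{f}(\rvct{x})} = \expc{}{(\func{f}\,\circ\,\vctfunc{g})(\rvct{u})}$ in the statement with no further argument. Second, since $\rvct{u}$ is a random vector on $\set{U} \subseteq \reals^M$ with density $\pden{\rvct{u}}{\vct{u}}$ with respect to $\lebm{M}{\vct{u}}$, I would invoke \eqref{eq:lotus} with the substitutions $\rvct{x} \mapsto \rvct{u}$, $N \mapsto M$ and $\func{f} \mapsto \func{h}$, giving
\begin{equation*}
  \expc{}{\func{h}(\rvct{u})} = \int_{\set{U}} \func{h}(\vct{u})\, \pden{\rvct{u}}{\vct{u}}\,\lebm{M}{\vct{u}}.
\end{equation*}
Substituting $\func{h} = \func{f}\,\circ\,\vctfunc{g}$ back in reproduces the right-hand side of \eqref{eq:lotus-corollary} exactly and closes the chain of equalities.

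The one point requiring care --- and the step I expect to be the main, if minor, obstacle --- is verifying that the composite $\func{f}\,\circ\,\vctfunc{g}$ is Lebesgue measurable on $\set{U}$, which is precisely the hypothesis \eqref{eq:lotus} imposes on its integrand. This is not automatic, since the composition of a continuous map with a merely Lebesgue measurable function can fail to be Lebesgue measurable. I would discharge it by observing that $\vctfunc{g}$ is Borel measurable (being continuous), so that for Borel measurable $\func{f}$ the preimage $(\func{f}\,\circ\,\vctfunc{g})^{-1}(B) = \vctfunc{g}^{-1}\lpa \func{f}^{-1}(B)\rpa$ of any Borel set $B$ is itself Borel, hence measurable. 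Strengthening the assumption on $\func{f}$ from Lebesgue to Borel measurability is harmless for every $\func{f}$ arising in the applications --- indicators of Borel sets, continuous test functions and the like --- and so suffices. With measurability in hand nothing else is needed: crucially, no change-of-variables Jacobian enters, because the base density $\pden{\rvct{u}}{\vct{u}}$ is integrated against the transported integrand $\func{f}\,\circ\,\vctfunc{g}$ rather than being pushed forward through $\vctfunc{g}^{-1}$, which is exactly what makes \abbrref{LOTUS} usable when no explicit density on $\rvct{x}$ is available.
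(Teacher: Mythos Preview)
Your proposal is correct and matches the paper's approach exactly: the paper states this corollary immediately after \abbrref{LOTUS} without any separate proof, treating it as the obvious instantiation of \eqref{eq:lotus} with $\rvct{u}$ in place of $\rvct{x}$ and $\func{f}\,\circ\,\vctfunc{g}$ in place of $\func{f}$. Your discussion of the measurability of the composite goes beyond what the paper addresses, but is a reasonable point to raise and does not deviate from the intended argument.
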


This leads us to the main result
\begin{theorem}
Let $\rvct{u}$ be a random vector with density $\pden{\rvct{u}}{\vct{u}} = \rho(\vct{u})$  with respect to the Lebesgue measure $\lebm{M}{\vct{u}}$ on support $\set{U} = \reals^M$. Further let $\genfunc : \set{U} \to \set{X}$ be a smooth map, with $\set{X} = \reals^N;\, N \leq M$ defining a random vector $\rvct{x} = \genfunc(\rvct{u})$. Assume $\pd{\genfunc}{\vct{u}}$ exists and has full row-rank almost everywhere.

Partition the output space $\set{X} = \set{Y} \times \set{Z}$ with $\set{Y} = \reals^{N_y}$ and $\set{Z} = \reals^{N_z}$ and  $\rvct{y} = \genfunc_y(\rvct{u})$, $\rvct{z} = \genfunc_z(\rvct{u})$. Then the conditional expectation of some function $\func{f}$ of $\rvct{z}$ given $\rvct{y} = \obs{\vct{y}}$ has been observed is
\begin{align}\label{eq:cond-expc-u-haum}
    \expc{}{\func{f}(\rvct{z}) \gvn \rvct{y} = \obs{\vct{y}}} 
    &=
    \frac{1}{\pden{\rvct{y}}{\obs{\vct{y}}}} 
    \int_{\set{C}} 
      \func{f}\,\circ\,\genfunc_z(\vct{u}) \, 
      \rho(\vct{u})\,
      \left| \pd{\genfunc_y}{\vct{u}}\pd{\genfunc_y}{\vct{u}}\tr \right|^{-\frac{1}{2}}
    \,\haum{M-N_y}{\vct{u}}.
\end{align}
with $\pden{\rvct{y}}{\obs{\vct{y}}}$ the marginal density on $\rvct{y}$ with respect to the Lebesgue measure $\lebm{N_y}{\vct{y}}$ which must be non-zero for the conditional expectation to be well-defined; $\set{C}$ is the $M - N_y$ dimensional sub-manifold defined by the solution set $\fset{\vct{u} \in \set{U} : \genfunc_y(\vct{u}) = \obs{\vct{y}}}$.
\end{theorem}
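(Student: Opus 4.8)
The plan is to characterize the conditional expectation \emph{weakly} --- by pairing $\func{f}(\rvct{z})$ against an arbitrary bounded measurable test function $g$ of the observed variables --- and then to transport the resulting integral from the input space $\set{U}$ onto the level sets of $\genfunc_y$ via the Co-Area Formula. First I would fix $g : \set{Y} \to \reals$ and consider the joint expectation $\expc{}{\func{f}(\rvct{z})\,g(\rvct{y})}$. Applying the \abbrref{LOTUS} corollary \eqref{eq:lotus-corollary} to the full generator $\genfunc = (\genfunc_y,\,\genfunc_z)$ with the scalar integrand $(\vct{y},\vct{z}) \mapsto \func{f}(\vct{z})\,g(\vct{y})$ rewrites this as $\int_{\set{U}} \func{f}\,\circ\,\genfunc_z(\vct{u})\,g\,\circ\,\genfunc_y(\vct{u})\,\rho(\vct{u})\,\lebm{M}{\vct{u}}$.

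Next I would invoke the Co-Area corollary \eqref{eq:co-area-formula-corollary} with $\vctfunc{m} = \genfunc_y$ (so $L = M$, $K = N_y$) and $\func{h}^\ast(\vct{u}) = \func{f}\,\circ\,\genfunc_z(\vct{u})\,g\,\circ\,\genfunc_y(\vct{u})\,\rho(\vct{u})$. The hypothesis that the full Jacobian $\pd{\genfunc}{\vct{u}}$ has full row-rank almost everywhere passes to its top block $\pd{\genfunc_y}{\vct{u}}$, so the Gram determinant $\left|\pd{\genfunc_y}{\vct{u}}\pd{\genfunc_y}{\vct{u}}\tr\right|$ is positive a.e.\ and the corollary applies. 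This turns the $\set{U}$-integral into an outer $\lebm{N_y}{\vct{y}}$ integral of an inner Hausdorff integral over the fibre $\genfunc_y^{-1}(\vct{y})$; on that fibre $\genfunc_y(\vct{u}) = \vct{y}$ is constant, so $g\,\circ\,\genfunc_y(\vct{u}) = g(\vct{y})$ factors out of the inner integral, giving for every $g$
\begin{equation*}
  \expc{}{\func{f}(\rvct{z})\,g(\rvct{y})}
  = \int_{\set{Y}} g(\vct{y}) \lsb \int_{\genfunc_y^{-1}(\vct{y})} \func{f}\,\circ\,\genfunc_z(\vct{u})\,\rho(\vct{u})\,\left|\pd{\genfunc_y}{\vct{u}}\pd{\genfunc_y}{\vct{u}}\tr\right|^{-\frac{1}{2}} \haum{M-N_y}{\vct{u}} \rsb \lebm{N_y}{\vct{y}}.
\end{equation*}

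Comparing with the defining property of the conditional expectation, $\expc{}{\func{f}(\rvct{z})\,g(\rvct{y})} = \int_{\set{Y}} g(\vct{y})\,\expc{}{\func{f}(\rvct{z}) \gvn \rvct{y} = \vct{y}}\,\pden{\rvct{y}}{\vct{y}}\,\lebm{N_y}{\vct{y}}$, and noting that $g$ ranges over a class rich enough to separate measures (the fundamental lemma of the calculus of variations), I would conclude that the two bracketed integrands coincide for $\lebm{N_y}$-almost every $\vct{y}$. Dividing by the marginal density $\pden{\rvct{y}}{\vct{y}}$ (assumed non-zero at $\obs{\vct{y}}$) and specialising $\vct{y} = \obs{\vct{y}}$, with $\genfunc_y^{-1}(\obs{\vct{y}}) = \set{C}$, yields \eqref{eq:cond-expc-u-haum}. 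Taking $\func{f} \equiv 1$ in the same identity recovers the companion formula for $\pden{\rvct{y}}{\vct{y}}$ itself, confirming that the normaliser is consistent.

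The main obstacle I expect is the identification step: passing from ``the two integrals agree against every test function $g$'' to ``the integrands agree pointwise a.e.''\ This is where measure-theoretic care is required --- one must verify that the inner Hausdorff integral is a genuine $\lebm{N_y}$-measurable, locally integrable function of $\vct{y}$ (so that it can legitimately be identified with $\expc{}{\func{f}(\rvct{z}) \gvn \rvct{y}=\vct{y}}\,\pden{\rvct{y}}{\vct{y}}$ by uniqueness of disintegration), and that the null set on which $\pd{\genfunc}{\vct{u}}$ loses full row-rank carries no mass, so that dropping it does not affect either side. By contrast, the three transport steps (\abbrref{LOTUS}, the Co-Area corollary, and pulling $g$ out on the fibre) are essentially bookkeeping once this framework is in place.
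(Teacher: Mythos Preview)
Your proposal is correct and follows essentially the same route as the paper: \abbrref{LOTUS} to pass to the input space, the Co-Area corollary \eqref{eq:co-area-formula-corollary} applied with $\vctfunc{m} = \genfunc_y$ to disintegrate along the fibres $\genfunc_y^{-1}(\vct{y})$, and then an identification of integrands to extract the conditional expectation at $\obs{\vct{y}}$.

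The one place where your argument differs from the paper's is the identification step. The paper starts from the Law of Total Expectation, $\expc{}{f(\rvct{z})} = \int_{\set{Y}} \expc{}{f(\rvct{z}) \gvn \rvct{y} = \vct{y}}\,\pden{\rvct{y}}{\vct{y}}\,\lebm{N_y}{\vct{y}}$, matches this against the Co-Area expansion of the same quantity, and then argues that since the equality of the two $\set{Y}$-integrals holds ``for arbitrary Lebesgue measurable $\func{f}$'', the integrands must agree on $\set{Y}^\star$. You instead introduce an explicit test function $g$ on $\set{Y}$, work with $\expc{}{\func{f}(\rvct{z})\,g(\rvct{y})}$, and separate points by varying $g$. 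Your route is the more standard and cleaner justification: varying $g$ over bounded measurable functions of $\rvct{y}$ is exactly what the definition of a regular conditional expectation requires, whereas the paper's appeal to arbitrariness of $\func{f}$ (which appears on \emph{both} sides of the equation) is less transparent as a pointwise-in-$\vct{y}$ argument. The measurability and null-set concerns you flag at the end are genuine but routine, and the paper does not address them either.
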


\begin{proof}
By the \emph{Law of Total Expectation} we have that
\begin{equation}
    \expc{}{f(\rvct{z})} =
    \int_{\set{Y}}
      \expc{}{f(\rvct{z}) \gvn \rvct{y} = \vct{y}}\,
      \pden{\rvct{y}}{\vct{y}}\,
    \lebm{N_y}{\vct{y}}.
\end{equation}
Using \abbrref{LOTUS} \eqref{eq:lotus} we get
\begin{equation}
    \int_{\set{Y}}
      \expc{}{f(\rvct{z}) \gvn \rvct{y} = \vct{y}}\,
      \pden{\rvct{y}}{\vct{y}}\,
    \lebm{N_y}{\vct{y}}
    =
    \int_{\set{U}}
       \func{f}\,\circ\,\genfunc_z(\vct{u}) \, \rho(\vct{u}) \,
    \lebm{M}{\vct{u}}.
\end{equation}
Applying the co-area formula corollary \eqref{eq:co-area-formula-corollary} to the right-hand side gives
\begin{align}
    &\int_{\set{Y}}
      \expc{}{f(\rvct{z}) \gvn \rvct{y} = \vct{y}}\,
      \pden{\rvct{y}}{\vct{y}}\,
    \lebm{N_y}{\vct{y}} =
    \\
    &\int_{\set{Y}}
      \int_{\genfunc_y^{-1}(\vct{y})}
        \func{f}\,\circ\,\genfunc_z(\vct{u}) \, 
        \rho(\vct{u})\,
        \left| \pd{\genfunc_y}{\vct{u}}\pd{\genfunc_y}{\vct{u}}\tr \right|^{-\frac{1}{2}}\,
      \haum{M-N_y}{\vct{u}}
    \lebm{N_y}{\vct{y}}.
\end{align}
Define $\set{Y}^\star = \fset{\vct{y} \in \set{Y} : \pden{\rvct{y}}{\vct{y}} > 0}$. Then we have
\begin{align}
    &\int_{\set{Y}^\star}
      \lbr \expc{}{f(\rvct{z}) \gvn \rvct{y} = \vct{y}} \rbr\,
      \pden{\rvct{y}}{\vct{y}}\,
    \lebm{N_y}{\vct{y}} =
    \\
    &\int_{\set{Y}^\star}
      \lbr 
        \frac{1}{\pden{\rvct{y}}{\vct{y}}}
        \int_{\genfunc_y^{-1}(\vct{y})}
          \func{f}\,\circ\,\genfunc_z(\vct{u}) \, 
          \rho(\vct{u})\,
          \left| \pd{\genfunc_y}{\vct{u}}\pd{\genfunc_y}{\vct{u}}\tr \right|^{-\frac{1}{2}}\,
        \haum{M-N_y}{\vct{u}}
      \rbr 
      \pden{\rvct{y}}{\vct{y}}
    \lebm{N_y}{\vct{y}}.
\end{align}
As this holds for arbitrary Lebesgue measurable $\func{f}$, this implies that the terms inside the braces are equal for all $\vct{y} \in \set{Y}^\star$. 
As $\obs{\vct{y}} \in \set{Y}^\star$ by assumption and $\set{C} = \genfunc_y^{-1}(\obs{\vct{y}})$ we have
\begin{align}
    \expc{}{\func{f}(\rvct{z}) \gvn \rvct{y} = \obs{\vct{y}}} 
    &=
    \frac{1}{\pden{\rvct{y}}{\obs{\vct{y}}}} 
    \int_{\set{C}} 
      \func{f}\,\circ\,\genfunc_z(\vct{u}) \, 
      \rho(\vct{u})\,
      \left| \pd{\genfunc_y}{\vct{u}}\pd{\genfunc_y}{\vct{u}}\tr \right|^{-\frac{1}{2}}
    \,\haum{M-N_y}{\vct{u}}.
\end{align}
\end{proof}

\begin{corollary}
Define a target density with respect to the Hausdorff measure $\haum{M-N_y}{\vct{u}}$ on $\set{C}$
\begin{equation}\label{eq:cond-expc-tgt-density-haum}
    \pi(\vct{u}) \propto
    \rho(\vct{u})\,
    \left| \pd{\genfunc_y}{\vct{u}}\pd{\genfunc_y}{\vct{u}}\tr \right|^{-\frac{1}{2}}.
\end{equation}
If we generate a set of \abbrref{MCMC} samples $\fset{\vct{u}^{(s)}}_{s=1}^S$ which leave $\pi(\vct{u})$ invariant with respect to $\haum{M-N_y}{\vct{u}}$ on $\set{C}$, by the \emph{Law of Large Numbers} we can then form a Monte Carlo estimate for the conditional expectation 
\begin{equation}\label{eq:mc-est-cond-expc}
    \expc{}{\func{f}(\rvct{z}) \gvn \rvct{y} = \obs{\vct{y}}} 
    =
    \lim_{S \to \infty}
    \frac{1}{S} 
    \sum_{s=1}^S \lpa
      \func{f}\,\circ\,\genfunc_z(\vct{u}^{(s)})
    \rpa.
\end{equation}
\end{corollary}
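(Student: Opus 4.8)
The plan is to reduce the claim to the main Theorem \eqref{eq:cond-expc-u-haum} and then invoke a strong law for the generated chain. First I would observe that the unnormalised integrand appearing in \eqref{eq:cond-expc-u-haum}, namely $\rho(\vct{u}) \left| \pd{\genfunc_y}{\vct{u}}\pd{\genfunc_y}{\vct{u}}\tr \right|^{-\frac{1}{2}}$, is exactly the target density $\pi(\vct{u})$ defined in \eqref{eq:cond-expc-tgt-density-haum} up to its normalising constant. To pin down that constant I would specialise the Theorem to the constant function $\func{f} \equiv 1$: since $\expc{}{1 \gvn \rvct{y} = \obs{\vct{y}}} = 1$, equation \eqref{eq:cond-expc-u-haum} collapses to
\begin{equation*}
    \pden{\rvct{y}}{\obs{\vct{y}}} = \int_{\set{C}} \rho(\vct{u}) \left| \pd{\genfunc_y}{\vct{u}}\pd{\genfunc_y}{\vct{u}}\tr \right|^{-\frac{1}{2}} \haum{M-N_y}{\vct{u}},
\end{equation*}
so the marginal density $\pden{\rvct{y}}{\obs{\vct{y}}}$ is precisely the normalising constant of $\pi$ on $\set{C}$.

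With this identification the Theorem rewrites cleanly as an expectation under the normalised target, $\expc{}{\func{f}(\rvct{z}) \gvn \rvct{y} = \obs{\vct{y}}} = \int_{\set{C}} \func{f}\,\circ\,\genfunc_z(\vct{u})\,\pi(\vct{u})\,\haum{M-N_y}{\vct{u}} = \expc{\pi}{\func{f}\,\circ\,\genfunc_z(\rvct{u})}$, where $\pi$ now integrates to one over $\set{C}$ against the Hausdorff measure. The right-hand side is exactly the quantity the sample average in \eqref{eq:mc-est-cond-expc} is built to estimate, so the remaining task is purely one of Monte Carlo convergence: the samples $\fset{\vct{u}^{(s)}}_{s=1}^S$ are produced by a chain leaving $\pi$ invariant on $\set{C}$, and I would appeal to a strong law of large numbers for such a chain to conclude $\frac{1}{S}\sum_{s=1}^S \func{f}\,\circ\,\genfunc_z(\vct{u}^{(s)}) \to \expc{\pi}{\func{f}\,\circ\,\genfunc_z(\rvct{u})}$ almost surely.

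The main obstacle is that the bare Law of Large Numbers applies to independent draws, whereas an \abbrref{MCMC} chain produces dependent, serially correlated samples; the honest justification is the Birkhoff ergodic theorem (equivalently a Markov-chain strong law), which requires more than mere $\pi$-invariance. Specifically I would need the chain to be ergodic --- irreducible and (positive Harris) recurrent with respect to $\pi$ on $\set{C}$ --- and the test statistic $\func{f}\,\circ\,\genfunc_z$ to be $\pi$-integrable. The integrability is a mild condition on $\func{f}$, but the ergodicity is the substantive hypothesis: it is exactly what the conditions recalled in \S\ref{sec:chmc} (a $C^2$ Hamiltonian, a connected smooth manifold $\set{C}$, and a full-rank constraint Jacobian, following \citep{brubaker2012family}) are designed to guarantee, together with the momentum-resampling step of Algorithm \ref{alg:constrained_hmc} that supplies irreducibility. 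Granting those, the ergodic average converges to the stationary expectation and the Corollary follows.
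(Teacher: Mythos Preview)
Your proposal is correct and follows essentially the same line the paper takes: the paper does not give a separate proof of this Corollary, treating it as immediate from the Theorem together with the phrase ``by the \emph{Law of Large Numbers}'' in the statement itself. Your write-up actually goes beyond the paper in rigour --- you make explicit the identification of the normalising constant via $\func{f}\equiv 1$, and you correctly flag that mere $\pi$-invariance is not enough and that the honest tool is a Markov-chain ergodic theorem requiring irreducibility/recurrence (supplied by the conditions of \citep{brubaker2012family} recalled in \S\ref{sec:chmc}) plus $\pi$-integrability of $\func{f}\circ\genfunc_z$, all of which the paper leaves implicit.
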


\section{Evaluating the target density and its gradient}\label{sec:tgt-density-grad}

For the constrained Hamiltonian dynamics we need to be able to evaluate the logarithm of the target density \eqref{eq:cond-expc-tgt-density-haum} up to an additive constant and its gradient with respect to $\vct{u}$. We have that
\begin{equation} \label{eq:log-cond-tgt-density}
    \log\pi(\vct{u}) = 
    \log\rho(\vct{u}) - 
    \frac{1}{2} \log\left| \pd{\genfunc_y}{\vct{u}}\pd{\genfunc_y}{\vct{u}}\tr \right| - 
    \log{Z}
\end{equation}
where $Z$ is the normalising constant for the density which is independent of $\vct{u}$.

In general evaluating the Gram matrix determinant $\log\left| \pd{\genfunc_y}{\vct{u}}\pd{\genfunc_y}{\vct{u}}\tr \right|$ has computational cost which scales as $\mathcal{O}(M N_y^2)$. However as part of the constrained dynamics updates the lower-triangular Cholesky decomposition $\mtx{L}$ of the Gram matrix $\pd{\genfunc_y}{\vct{u}}\pd{\genfunc_y}{\vct{u}}\tr$ is calculated. Using basic properties of the matrix determinant we have
\begin{align}
    \log\pi(\vct{u}) &= 
    \log\rho(\vct{u}) - 
    \frac{1}{2} \log\left| \mtx{L}\mtx{L}\tr \right| - 
    \log{Z} \\
    &=
    \log\rho(\vct{u}) - 
    \frac{1}{2}\log\left|\mtx{L}\right|\left|\mtx{L}\tr\right| -
    \log{Z} \\
    &=
    \log\rho(\vct{u}) - 
    \log\left|\mtx{L}\right| - 
    \log{Z} \\
    &=
    \log\rho(\vct{u}) - 
    \sum_{i=1}^{N_y}\log(L_{ii}) - 
    \log{Z}
    \label{eq:log-cond-tgt-density-chol}
\end{align}
The base density $\rho(\vct{u})$ will typically be of a simple form e.g. standard Gaussian, therefore we can evaluate the logarithm of the target density up to an additive constant at a marginal computational cost that scales linearly with dimensionality.

For the gradient we can use reverse-mode automatic differentiation to calculate the gradient of \eqref{eq:log-cond-tgt-density-chol} with respect to $\vct{u}$. This requires propagating partial derivatives through the Cholesky decomposition \citep{murray2016differentiation}; efficient implementations for this are present in many automatic differentiation frameworks including Theano.

Alternatively the gradient of \eqref{eq:log-cond-tgt-density} can be explicitly derived. The gradient of $\log\rho(\vct{u})$ will generally be trivial and $\pd{\log Z}{\vct{u}} = \vct{0}$. The gradient of the second term can be calculated using
\begin{align}
    \pd{}{u_i} \log\left| \pd{\genfunc_y}{\vct{u}}\pd{\genfunc_y}{\vct{u}}\tr \right| &= 
    \trace\lbr 
      \lsb \pd{\genfunc_y}{\vct{u}}\pd{\genfunc_y}{\vct{u}}\tr \rsb^{-1} 
      \lsb 
        \pdd{\genfunc_y}{u_i}{\vct{u}} \pd{\genfunc_y}{\vct{u}} \tr +
        \pd{\genfunc_y}{\vct{u}} \pdd{\genfunc_y}{u_i}{\vct{u}} \tr
      \rsb
    \rbr\\
    &=
    2 \trace\lbr 
      \pdd{\genfunc_y}{u_i}{\vct{u}}
      \lsb \mtx{L}^{-\rm T}\mtx{L}^{-1} \pd{\genfunc_y}{\vct{u}} \rsb\tr
    \rbr.
\end{align}
The matrix inside the square brackets is independent of $i$ and can be computed once by solving the system of equations by forward and backward substitution. The matrix of second partial derivatives $\pdd{\genfunc_y}{u_i}{\vct{u}}$ can either be manually derived for the specific generator function or calculated using automatic differentiation. The trace of the matrix product is then just the sum over all indices of the element-wise product of the pair.

\section{Exploiting structure in the generator}\label{sec:generator-structure-cholesky}

Often the generator inputs $\rvct{u}$ can be split in to two distinct groups --- global inputs $\rvct{v}$ which effect all of the observed outputs (e.g. inputs which map to model parameters) and local `noise' inputs $\rvct{n}$, each element of which affect only a subset of the outputs. 

In particular systems with a generator function $\genfunc_y$ which can be expressed in one of the two forms
\begin{equation}\label{eq:elem_autoreg_model_struct}
    y_i = \func{g}_i(\vct{v}, n_i)
    ~\text{(element-wise)}
    ~~\text{or}~~
    y_{i} = \tilde{\func{g}}_i(\vct{v}, y_{i-1}, n_i) = \func{g}_i(\vct{v}, \vct{n}_{\leq i})
    ~\text{(autoregressive)}
\end{equation}
have a Jacobian $\pd{\uconfunc}{\vct{u}} = \lsb {\scriptstyle \pd{\uconfunc}{\vct{v}} ~ \pd{\uconfunc}{\vct{n}}} \rsb$ in which $\pd{\uconfunc}{\vct{n}}$ is diagonal (element-wise) or triangular (autoregressive). 

The decomposition of $\pd{\uconfunc}{\vct{u}}\pd{\uconfunc}{\vct{u}}\inlinetr = \pd{\uconfunc}{\vct{n}}\pd{\uconfunc}{\vct{n}}\phantom{}^{\rm T} + \pd{\uconfunc}{\vct{v}} \pd{\uconfunc}{\vct{v}}\phantom{}^{\rm T}$ can then be computed by low-rank Cholesky updates of the triangular / diagonal matrix $\pd{\uconfunc}{\vct{n}}$ with each of the columns of $\pd{\uconfunc}{\vct{v}}$. As $\dim(\vct{v}) = L$ is often significantly less than, and independent of, the number of outputs conditioned on $N_y$, the resulting $\mathcal{O}(LN_y^2)$ cost of the Cholesky updates is a significant improvement over the original $\mathcal{O}(N_y^3)$.

Many learnt differentiable generative models have an element-wise noise structure including the Gaussian \abbrref{VAE}. The autoregressive noise structure commonly occurs in stochastic dynamical simulations where the outputs are a time sequence of states, with noise being added each time-step, for example the Lotka-Volterra model considered in the experiments in Section \ref{sec:experiments}.

\section{Lotka-Volterra parameter empirical histogram}\label{sec:larger-lv-param-hist}

Larger version of figure \ref{sfig:lotka-volterra-marginals} showing empirical histograms for posterior samples of Lotka--Volterra model parameters.

\begin{figure}[H]
\centering
  \includegraphics[width=\textwidth]{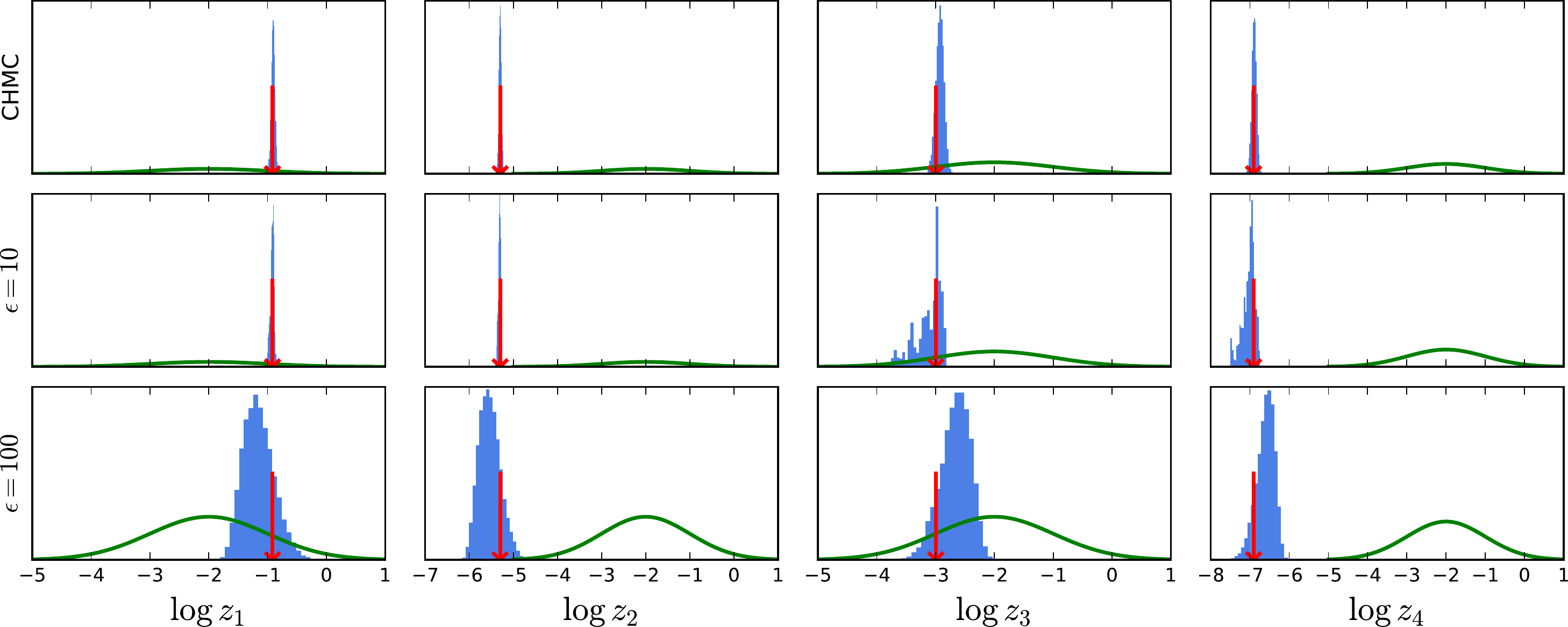}
  \caption{Marginal empirical histograms for the (logarithm of the) four parameters (columns) from constrained \abbrref{HMC} samples (top) and \abbrref{ABC} samples with $\epsilon=10$ (middle) and $\epsilon=100$ (bottom). Horizontal axes shared across columns. Red arrows indicate true parameter values. Green curve - log-normal prior density.}
  \label{fig:lotka-volterra-marginals-with-prior}
\end{figure}

\newpage
\twocolumn
\bibliography{refs}

\end{document}